\newtheorem{theorem}{Theorem}
\newtheorem{proposition}{Proposition}
\newtheorem{lemma}{Lemma}
\theoremstyle{thmstyletwo}%
\newtheorem{definition}{Definition}%
\begin{document}

\title{Load Balanced Demand Distribution under Overload Penalties}

\author{Sarnath Ramnath}
\affiliation{%
  \institution{St. Cloud State University, USA}}  
\email{sarnath@stcloudstate.edu}

\author{Venkata M. V. Gunturi}
\affiliation{%
  \institution{University of Hull, UK} \institution{IIT Ropar, Punjab, India}}
\email{v.gunturi@hull.ac.uk} 

\author{Subi Dangol} 
\affiliation{%
  \institution{St. Cloud State University, USA}} 
\email{sdangol3@stcloudstate.edu}
 
  \author{Abhishek Mishra}
\affiliation{%
  \institution{IIT Ropar, Punjab, India}}
\email{2018csm1002@iitrpr.ac.in}

 \author{Pradeep Kumar}
\affiliation{%
  \institution{IIT Ropar, Punjab, India}}
\email{2019CSM1008@iitrpr.ac.in}

\renewcommand{\shortauthors}{Ramnath et al.}

\begin{abstract}
Input to the Load Balanced Demand Distribution (LBDD) consists of the following: (a) a set of public service centers (e.g., schools); (b) a set of demand (people) units and; (c) a cost matrix containing the cost of assignment for all demand unit-service center pairs. In addition, each service center is also associated with a notion of capacity and a penalty which is incurred if it gets overloaded. Given the input, the LBDD problem determines a mapping from the set of demand units to the set of service centers. The objective is to determine a mapping that minimizes the sum of the following two terms: (i) the total assignment cost between demand units and their allotted service centers and, (ii) total of penalties incurred. The problem of LBDD finds its application in the domain of urban planning. An instance of the LBDD problem can be reduced to an instance of the min-cost bi-partite matching problem. However, this approach cannot scale up to the real world large problem instances. The current state of the art related to LBDD makes simplifying assumptions such as infinite capacity or total capacity being equal to the total demand. This paper proposes a novel allotment subspace re-adjustment based approach (ASRAL) for the LBDD problem. We analyze ASRAL theoretically and present its asymptotic time complexity. We also evaluate ASRAL experimentally on large problem instances and compare with alternative approaches. Our results indicate that ASRAL is able to scale-up while maintaining significantly better solution quality over the alternative approaches. In addition, we also extend ASRAL to para-ASRAL which uses the GPU and CPU cores to speed-up the execution while maintaining the same solution quality as ASRAL. 
\end{abstract}

\begin{CCSXML}
<ccs2012>
   <concept>
       <concept_id>10002951.10003227.10003236.10003237</concept_id>
       <concept_desc>Information systems~Geographic information systems</concept_desc>
       <concept_significance>500</concept_significance>
       </concept>
 </ccs2012>
\end{CCSXML}

\ccsdesc[500]{Information systems~Geographic information systems}

\keywords{spatial networks, voronoi diagrams, graph algorithms}

\maketitle

\section{Introduction}
\label{intro}
The problem of Load Balanced Demand Distribution (LBDD) takes the following three as input. (a) a set $S$ of service centers (e.g., COVID clinics, schools etc.); (b) a set $D$ of demand units (e.g., people); (c) a cost matrix $\mathcal{CM}$ which contains the cost of assigning a demand unit $d_i\in D$ to a service center $s_j \in S$ (for all $<d_i,s_j>$ pairs). Additionally, each service center $s_j \in S$ is associated with a positive integer capacity and a notion of ``penalty'' which denotes the \textit{"extra cost"} that must be paid to overload the particular service center. Given the input, the LBDD problem determines an mapping between the set of demand units and the set of service centers. The objective here is to determine a mapping which minimizes the sum of the following two terms: (1) total cost of assignment (accumulated across all assignments) and, (2) total penalty incurred (if any) while overloading the service centers.

\noindent \textbf{Problem Motivation:} 
The Load Balanced Demand Distribution (LBDD) problem finds its application in the domain of urban planning. For instance, consider the task of defining the geographic zones of operation of service centers such as schools, COVID clinics and walk-in COVID-19 testing centers (in case of continuous monitoring of the disease at a city scale). 

The key aspect over here being that each of the previously mentioned type of service centers is associated with a general notion of capacity. This capacity dictates the number of people (demand) that can be accommodated (comfortably) each day, week or during any specific duration of time (e.g., typical duration of sickness of patients). In addition, the quality of service at any of these service centers is expected to degrade if significantly more number of people (beyond its capacity) are assigned to it. Now, determining a ``optimal'' mapping between the demand and the service centers under such constraints can be mapped to our LBDD problem. Following is an illustrative instance of our LBDD problem with walk-in COVID-19 testing centers. One can easily define a LBDD instance for schools as service centers in a similar fashion.
 
Consider the task of defining the region of operation for each of the walk-in COVID-19 testing centers in a city. As one can imagine, any testing center would have only  limited resources in terms of testing-kits, lab technicians and other relevant equipment. These infrastructural limits would define its capacity in terms of number of people who can be tested each day. And the overall service-time is likely to get affected if significantly more people (beyond its capacity) show up at this clinic. This aspect is modeled as penalty. Cost of assigning a person to a particular service center could be defined in terms of shortest distance, travel-time or cost of travelling along the shortest path. Now, determining the region of operation under such constraints can be mapped to our LBDD problem. 
 
\noindent \textbf{Challenges:} An instance of LBDD problem can be theoretically reduced to an instance of the min-Cost bipartite matching problem \cite{NET20477}. However, optimal algorithms for min-cost bipartite matching fail to scale up for large LBDD problem instances. For instance, our experiments revealed that for an LBDD instance with just $13$ service centers and around $4000$ demand vertices, the optimal algorithm for min-cost bipartite matching takes around $7$ hours for execution and \emph{occupies about $40$ GB in RAM}. From this, one can easily expect that the optimal algorithm for min-cost bipartite matching algorithm cannot scale-up to large city-scale problem instances. 

\noindent \textbf{Limitations of Related Work}
The current state of the art most relevant to our work includes the work done in the area of network voronoi diagrams without capacities \cite{okabe,Demiryurek2012}, network voronoi diagrams under capacity constraints \cite{Yang2013,7123646,U2010,U2008}, weighted voronoi diagrams \cite{pdiag} and optimal location queries (e.g., \cite{Yao2014,Xiao2011,Diabat16}).  

Work done in the area of network voronoi diagrams without capacities \cite{okabe,Demiryurek2012} assume that the service centers have infinite capacity, an assumption not suitable in many real-world scenarios. On the other hand, work done in the area of network voronoi diagrams with capacities \cite{7123646,U2010,U2008,Yang2013} did not consider the notion of ``overload penalty.'' They perform allotments (of demand units) in an iterative fashion as long as there exists a service center with available capacity. In other words, the allotments stop when all the service centers are full (in terms of their capacity). This paper considers the problem in a more general setting in the sense that we allow the allotments to go beyond the capacities of the service centers. And after a service center is full, we use the concept of the \emph{overload penalties} for guiding the further allotments.


Weighted voronoi diagrams \cite{pdiag} are specialized voronoi diagrams. In these diagrams, the cost of allotting a demand unit $x$ to a service center $p$ is a linear function of the following two terms: (i) distance between $x$ and $p$ and, (ii) a \emph{real number} denoting the weight of $p$ as $w(p)$. LBDD problem is different from weighted voronoi diagrams. Unlike the weighted voronoi diagrams, our ``$w(p)$'' is a \emph{function} of the number of allotments already made to the service center $p$. And it would return a non-zero value only when the allotments cross beyond the capacity. Whereas in \cite{pdiag}, $w(p)$ is assumed to play its role throughout. 
 
Work done in the area of optimal location queries (e.g., \cite{Yao2014,Diabat16}) focus on determining a new location to start a new facility while optimizing a certain objective function (e.g., total distance between clients and facilities). Whereas, in LBDD, we already have a set of facilities which are up and running, and we want to load balance the demand around them.

\noindent \textbf{Our of Contributions:} This paper makes the following contributions:

\noindent (a) We propose the concept of allotment subspace re-adjustment for the LBDD problem.

\noindent (b) Using the concept of allotment subspace re-adjustment, we propose a novel \emph{ASRAL algorithm} for solving the LBDD problem.

\noindent (c) We evaluate \emph{ASRAL} both theoretically and experimentally (via large problem instances).

\noindent (d) Our experimental results indicate that ASRAL is able to scale-up while maintaining \emph{significantly} better solution quality over the alternative approaches. 

\noindent (e) We also extend ASRAL algorithm to develop a parallel algorithm which can take advantage of increasingly available multi-core and multi-processor systems. Our parallel implementation of ASRAL, called as \emph{para-ASRAL}, takes advantage of both GPU and CPU cores to decrease the execution time while maintaining the same solution quality as ASRAL. 

\noindent (f) Our theoretical analysis shows that ASRAL can be trivially generalized to obtain optimal result when service centers are constrained in terms of capacity, i.e., they are not allowed to be overloaded.  


\noindent \textbf{Outline:} The rest of this paper is organized as follows: In Section \ref{bc}, we provide the basic concepts and the problem statement. Section \ref{pa} presents our proposed approach. In Section \ref{pasral}, we present our ideas for a parallel implementation (para-ASRAL algorithm) of the ASRAL algorithm. Section \ref{ana} provides detailed theoretical analysis and complexity of our proposed approach. Section \ref{exp} provides an experimental evaluation of our approaches (and the related work). We conclude the paper in Section \ref{con}. Appendix \ref{optlbdd} presents the a trivial generalization of the ASRAL algorithm which can compute the optimal solution when the service centers are not allowed to be overloaded.

\begin{table}[ht]  
\centering
 \begin{tabular}{|m{2cm} | m{5cm} |} 
 \hline 
 \smallskip
Symbol used & Meaning \\
 \hline  
 \smallskip
 $D$ & Set of $n$ demand units given in input \\ \hline
 $S$ & Set of $k$ service centers given in input \\ \hline 
 $q_{s_i}()$ & Penalty function of a service center $s_i$ \\ \hline
 $\mathcal{A},\mathcal{B},\mathcal{F}$ & We use mathcal letters to denote an allotment set. An allotment contains the pairs $<d_i,s_j>$. Here, demand unit $d_i$ is allotted to service center $s_j$ \\ \hline
 $\mathcal{CM}(d_i,s_j)$ & Cost (positive integer) of assigning demand unit $d_i$ to  service center $s_j$ \\ \hline
 $\Gamma(\mathcal{A})$ & Allotment subspace graph of the allotment $\mathcal{A}$. Nodes of this graph are service centers and edges represent transfer of demand units across service centers. \\ \hline
  \hline
 \end{tabular}
 \caption{Notations used in the paper}
 \label{notations}
\end{table}

\section{Basic Concepts and Problem Definition}
\label{bc}
\begin{definition}
\textbf{A Service Center} is a public service unit of a particular kind (e.g., schools, hospitals, COVID-19 testing centers in a city). A set of service centers is represented as $S=\{s_1,..., s_{k}\}$.  
\end{definition}

\begin{definition}
\textbf{A Demand unit} represents a unit population which is interested in accessing the previously defined service center. A set of demand units is represented as  $D=\{{d_1},..., d_{n}\}$, $n$ is the number of demand units. 
\end{definition}

\begin{definition}
\textbf{Capacity of a service center} $(c_{s_i})$ is the prescribed number of  demand units that a service center $s_i$ can accommodate comfortably. For example, the number of people a COVID-19 testing facility can test in a day (or a week) would define its capacity.  Similarly, the number of students a school can admit would correspond to the notion of capacity defined in this paper.
\end{definition}

\begin{definition}
\label{pdef}
\textbf{Penalty function for a service center} $(q_{s_i}())$ is a function which returns the ``extra cost'' ($>0$) that must be paid for every new allotment to the service center $s_i$ which has already exhausted its capacity $c_{s_i}$. If allotment is done within the capacity of a service center, then no penalty needs to be paid. Examples of penalty cost in real world could be cost to add additional infrastructure and/or faculty in a school. 

Penalty function takes into account the current status of $s_i$ (i.e., how many nodes have been already added to $s_i$) and then returns a penalty for the $j^{th}$ ($1\leq j \leq (n-c_{s_i})$) allotment beyond its capacity. $q_{s_i}()$ returns only positive non-zero values and is monotonically increasing over $j$ ($1\leq j \leq (n-c_{s_i})$). The intuition is that one may need to add increasingly more resources to a school (or a clinic) as the overloading keeps increasing.
\end{definition}
 
\begin{definition} 
\textbf{Demand-Service Cost Matrix $\mathcal{CM}$:} contains the cost (a positive integer) of assigning a demand unit $d_i\in D$ to a service center $s_j \in S$ (for all $<d_i,s_j>$ pairs). The assignment cost can represent things such as shortest distance over the road network, travel-time and/or cost of traveling from $d_i$ to $s_j$ along the shortest path (in Dollars), Geodetic distance or Euclidean distance, etc. 
\end{definition}

\begin{definition}
\textbf{Allotment:} An allotment is a set containing pairs $<d_i,s_j>$. Here, demand unit $d_i$ is allotted to service center $s_j$. We use mathcal letters (e.g., $\mathcal{A},\mathcal{B}$) to denote an allotment.
\end{definition}

\begin{definition}
\textbf{Complete Allotment:} An allotment is said to be an complete allotment iff it has cardinality $n$ (i.e., number of demand nodes). 
\end{definition}

\subsection{Problem Statement}
\label{probdef}
We now formally define the problem of load balanced demand distribution by detailing the input, output and the objective function:

\noindent \textbf{Given}:			
			\begin{itemize}
			\item A set of service centers $S=\{s_1,..., s_{k}\}$.
			\item A set of demand units  $D=\{{d_1},..., d_{n}\}$.
			\item A demand-service cost matrix $\mathcal{CM}$ which contains the cost of allotting a demand unit $d_i \in D$ to a service center $s_j \in S$ ($\forall d_i \in D,~\forall s_j \in S$). 
			\item Capacity $(c_{s_i})$ of each service center $s_i \in S$.  
			\item Penalty function $(q_{s_i}())$ of each service center $s_i \in S$.
			\end{itemize}
			
\noindent \textbf{Output}: A complete allotment which minimizes the following objective function. 
			
\noindent \textbf{Objective Function}: 
			\begin{multline}
			 \textit{\textbf{Minimize}}
			 \Bigg\{
			 \sum_{\substack{s_i \in Service \\Centers}} 	
				\Bigg\{
				\sum_{\substack{d_j \in Demand  \ unit\\ allotted \ to \  s_i}} 
					\mathcal{CM}(d_j,s_i)
					\Bigg\} \\ + Total~Penalty~across~all~s_i \Bigg\}
			\label{eq2}   
			\end{multline}

\subsection{Instantiating LBDD problem in real life}		
For instantiating LBDD problem real life, we need to define our assignment cost (in $\mathcal{CM}$) and penalties. As mentioned earlier, assignment cost can be defined in terms such as shortest distance, cost of traveling along the shortest path, geodetic distance, etc. The penalty values can be defined along the terms of cost (in dollars) of adding more infrastructure to the service center or additional delay in response caused due to overload. Note that for maintaining easy interpretability of results, we should have common  ``units of measurement'' for both the assignment cost and penalty. Cost (in Dollars) could be one such common unit of measurement when we are working with public service units such as COVID clinics, COVID-19 testing centers, schools, etc. Appendix \ref{pracLBDD} provides another sample instantiation of the LBDD problem with schools as service centers.

\begin{figure*}[ht] 
\begin{center} 
\subfigure[An arbitrary  allotment.]{\label{ncyca}\includegraphics[width=68mm]{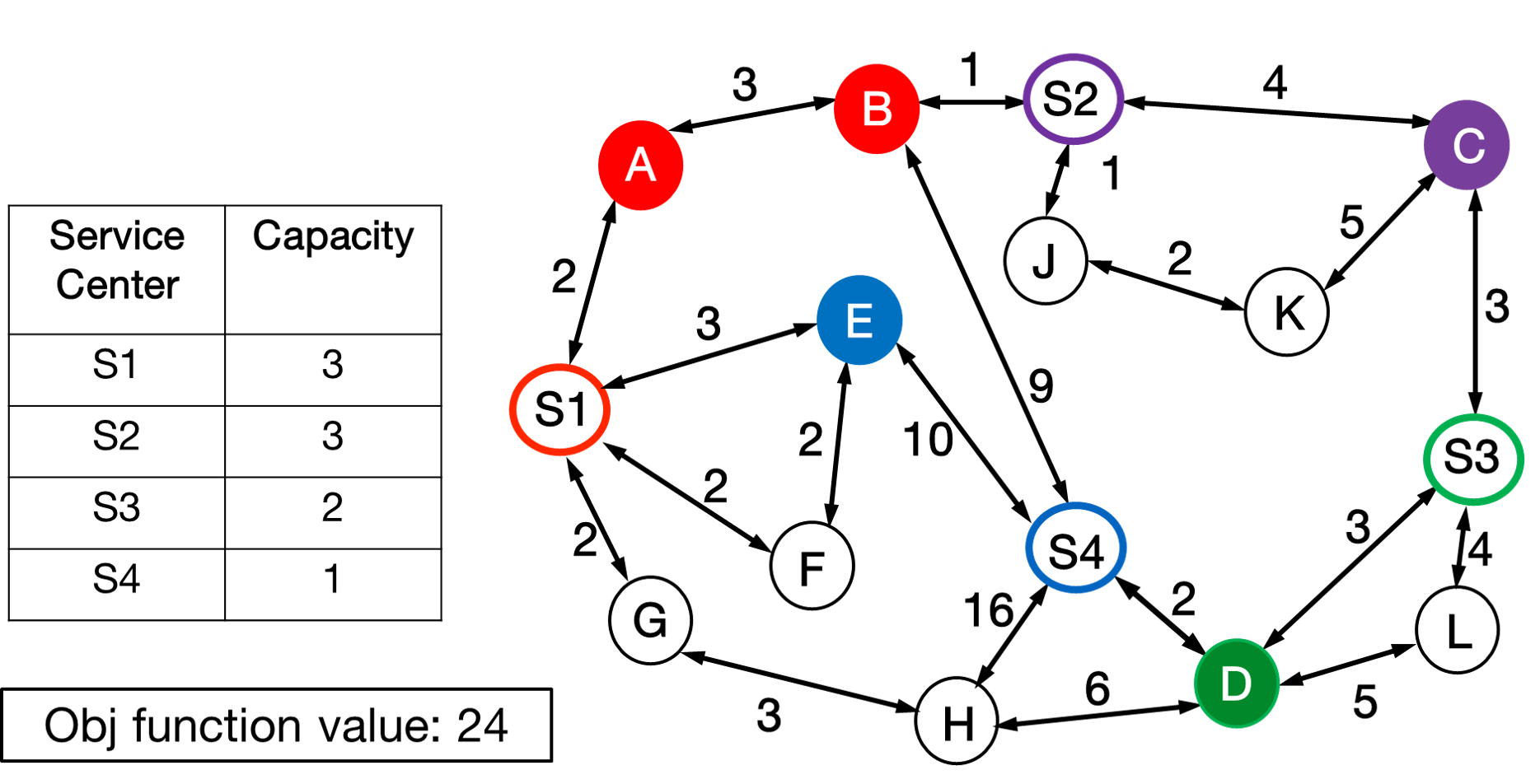}} \hspace{8mm}
\subfigure[Sample edges in the allotment sub-space multi-graph]{\label{ncycb}\includegraphics[width=45mm]{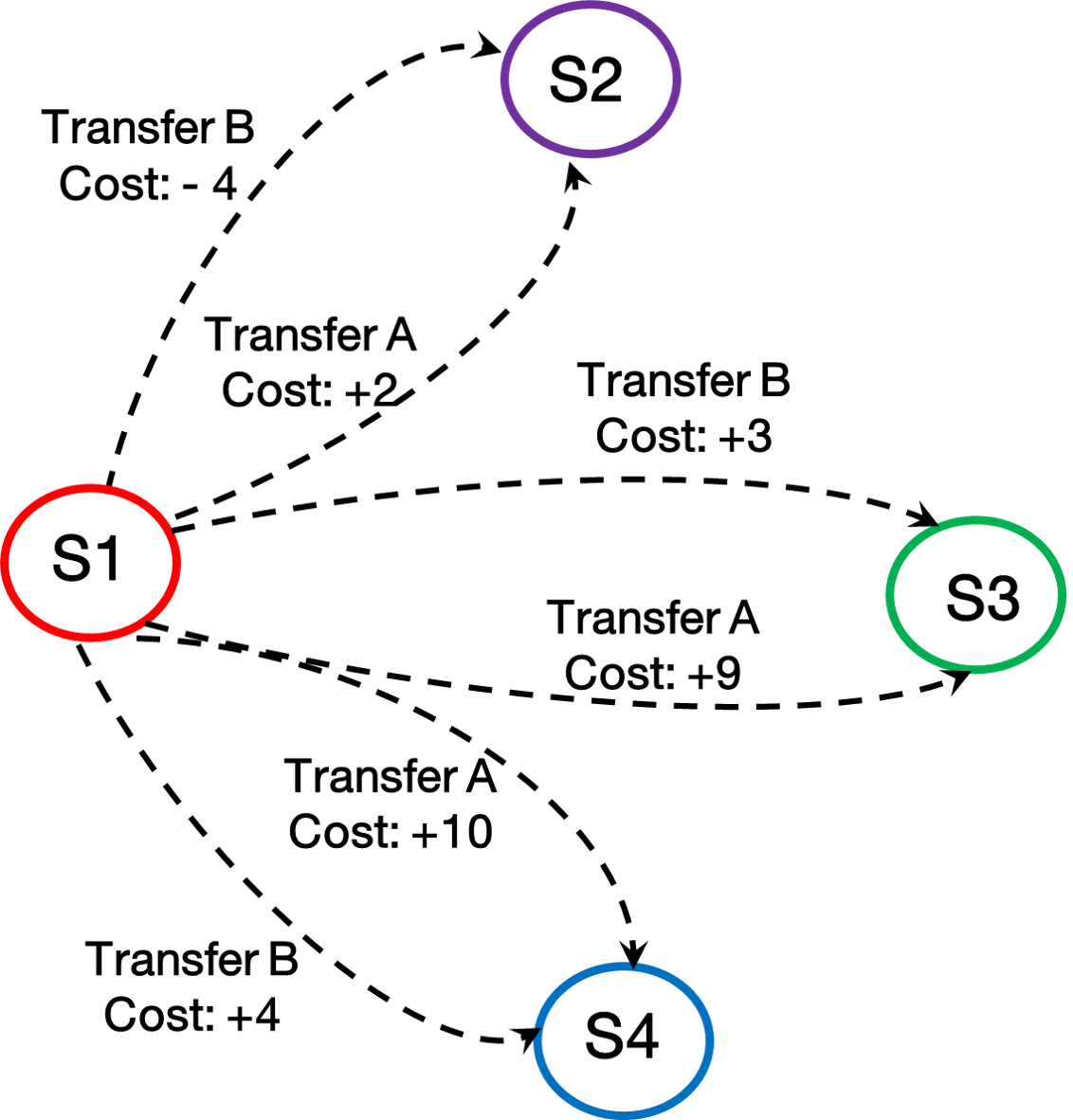}} \hspace{8mm}
\subfigure[A negative cycle]{\label{ncycc}\includegraphics[width=45mm]{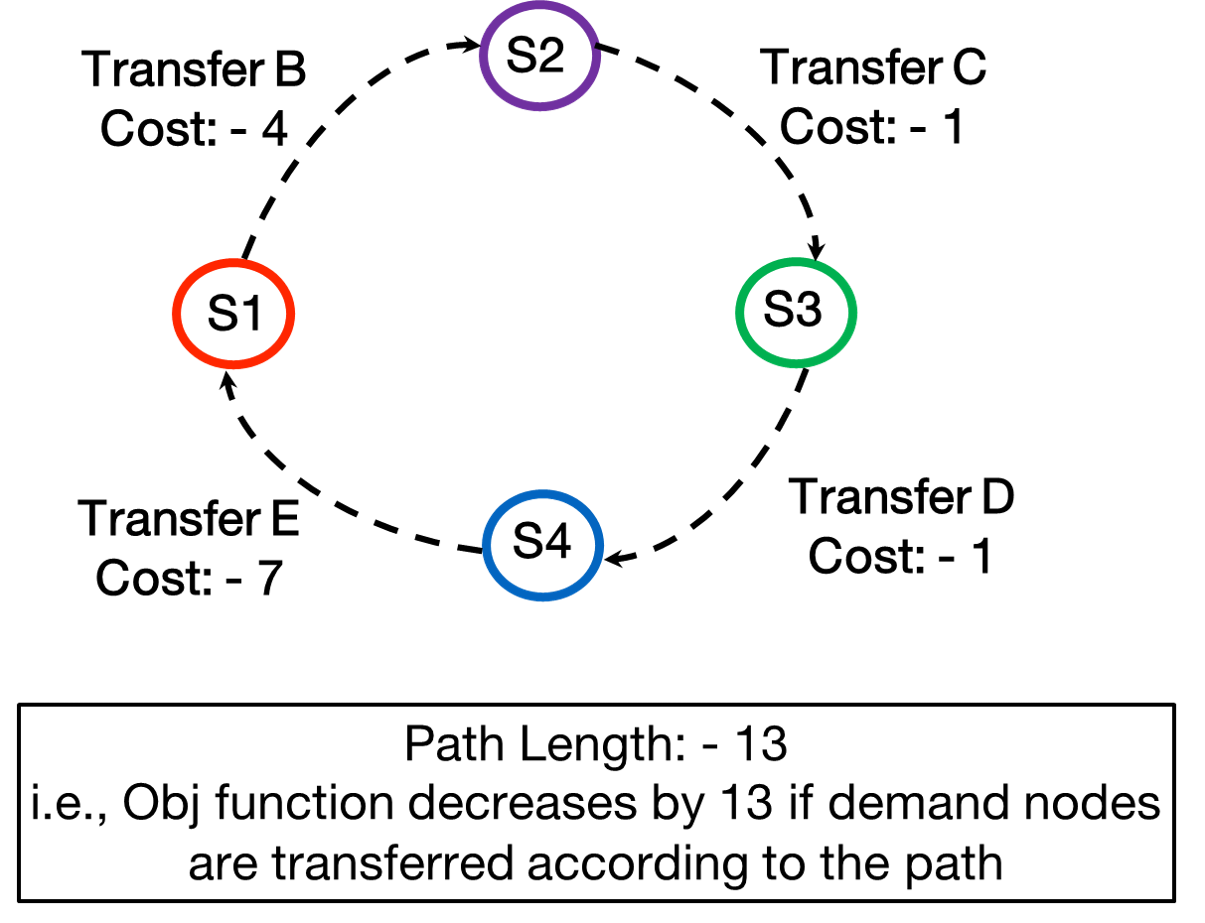}}
\end{center}
\caption{Illustrating allotment subspace multi-graph on sample allotment. Vertices S1, S2, S3 and S4 are service centers. demand units allotted to a service center are shaded using the same color as that of the service center, e.g., demand unit A is allotted to S1 in the sub-figure (a). }
\label{ncyc}  
\vspace{-2mm}   
\end{figure*}

\section{Proposed Approach}
\label{pa}
This section presents our proposed Allotment Subspace Re-adjustment based approach (ASRAL) for the LBDD problem. The remaining of section is organised as follows. Section \ref{over} presents an overview of our proposed approach. In Section \ref{asr}, we introduce our key computational idea of allotment subspace re-adjustment. In Section \ref{ncalgo} and Section \ref{npalgo}, we describe some crucial sub-routines which would be used later in the our ASRAL algorithm. In Section \ref{algodetails}, we present the details of our proposed ASRAL algorithm. We present a parallel implementation of ASRAL (called as para-ASRAL) in Section \ref{pasral}.

\subsection{Overview of the algorithm}
\label{over}
Overall, our algorithm follows an incremental strategy to build the solution. Initially, we determine the ``most suitable'' service center (using $\mathcal{CM}$ given in the input) for all the demand units $d_i \in D$. The ``most suitable'' for a demand unit $d_i$ is defined as the service center $s_{min}$ with whom $d_i$ has the lowest assignment cost in $\mathcal{CM}$. Following this, we insert the tuples $<$demand unit $d_i$, its corresponding $s_{min}>$ into a heap. This heap is ordered on the assignment cost (to their respective $s_{min}$) of the tuples. In each iteration, the demand unit to be processed next is picked from the heap using the extract-min operation. Let the result of the extract-min operation (in an arbitrary iteration) be the demand unit $d_i$ and its ``most suitable'' service center be $s_{min}$. Now, we have two cases:
 
\noindent $\mathbf{s_{min}}$ \textbf{has free capacity (trivial case):} $d_i$ is assigned to $s_{min}$ and the objective function is increased by a value ``$\mathcal{CM}(d_i,s_{min})$.''

\noindent $\mathbf{s_{min}}$ \textbf{was already full:} Here as well we assign $d_i$ to $s_{min}$. However, this allotment would lead to penalty. Thus, the objective function is increased by a value ``$\mathcal{CM}(d_i,s_{min})$ + penalty at $s_{min}$''

Now, we attempt to adjust our current allotment with the aim of decreasing the objective function value and possibly offset the effect the penalty. These adjustments are of following two types. First, we undertake adjustments which do not lead to any change in total penalty. This is done by removing any negative cycles, in what we refer to as the \emph{allotment subspace multigraph} (details in next section). Following this, we attempt to ``move'' the overload created on $s_{min}$ to a different service center via another sequence of adjustments. This is done by removing the most negative path (originating on $s_{min}$) from the \emph{allotment subspace multigraph}. Note that in order to determine the most negative path, the \emph{allotment subspace multigraph} must not have any negative cycles. Our algorithm guarantees this aspect both after the first set of re-adjustments (i,e., removing negative cycle) and after moving the overload to the best possible location (i.e., removing negative path).

\subsection{Allotment Subspace Re-adjustment} 
\label{asr}
The key aspect of this paper is the concept of re-adjustments in the space of, what we refer to as, the \emph{allotment subspace multi-graph}. Each service center (given in the problem instance) forms a node in this allotment subspace multi-graph. Each edge in the multigraph represents the potential transfer of a demand unit from one service center another. Formally, we define this as follows. 

\begin{definition}
Given an allotment $\mathcal{A}$ comprising of tuples of the form $<d_i,s_j>$ (demand unit $d_i$ is allotted to $s_j$), the allotment subspace multi-graph $\Gamma(\mathcal{A}) = (M,N)$ is defined as follows:
\begin{itemize}
    \item \textbf{Set of nodes} ($M$): Each service center $s_i \in S$ is a node in $M$.
    \item  \textbf{Set of edges} ($N$): Each edge is defined as a triple $e=(s_i,s_j,dn)$. Here, $e$ represents the potential transfer of the demand unit $dn$ (which is currently allotted to $s_i$ in $\mathcal{A}$) from $s_i$ to $s_j$. Cost of $e$ is defined as $\mathcal{CM}(dn,s_j)-\mathcal{CM}(dn,s_i)$. This cost is referred to as the \emph{transfer-cost} of the demand unit $dn$.
\end{itemize}
\label{gammadef} 
\end{definition}

Consider Figure \ref{ncyca} which illustrates an arbitrary allotment for a sample problem instance. Note that Figure just a illustrates an arbitrary allotment (rather than following a procedure) to help in easy illustration of the negative cycle concept. For ease of understanding, the figure illustrates the LBDD instance where both service centers and demand units are present in a road network represented as directed graph. Here, nodes $S1$, $S2$, $S3$ and $S4$ are service centers, whereas other nodes (e.g., $A$, $B$, $C$, etc) are demand units (people). In this problem, cost of assigning a demand unit $v$ to a service center $r$ is the shortest distance between $v$ and $r$. For instance, cost of assigning demand unit $B$ to $S1$ is the shortest distance between $B$ and $S1$ in the graph (which is $5$ in this case). The figure also details the total capacity of each of the service centers. In the assignment shown in Figure \ref{ncyca}, demand units which are allotted a service center are filled using the same color as that of their allotted service center. For e.g., demand units $A$, $B$ are assigned to resource unit $S1$. Nodes which are not yet allotted are shown without any filling.

The allotment subspace graph of this allotment would contain four nodes (one for each service center). And, between any two service centers $s_i$ and $s_j$, it would contain directed edges representing transfer of demand units across the service centers. Edges directed towards $s_j$ (from $s_i$) would represent transfer of demand units to $s_j$ (from $s_i$). 

Figure \ref{ncycb} illustrates some edges of the allotment subspace graph ($\Gamma()$) of the allotment shown in Figure \ref{ncyca}. To maintain clarity, we do not show all the edges in the Figure. Figure \ref{ncycb} illustrates only the edges whose tail node is service center S1. Now, given that S1 was allotted two demand nodes ($A$ an $B$ according to Figure \ref{ncyca}), the node corresponding to S1 in the allotment subspace graph would have six outgoing edges (two edges to each of the other service centers). For instance, consider the two edges directed from S1 to S2 in Figure \ref{ncycb}. One of them represents transfer of demand units $A$ (to S2), and the other one represents the transfer of demand unit $B$ to the service center S2. Cost of the edge is defined as the difference in the assignment cost to the service centers. For instance, cost of edge corresponding to $A$ in Figure \ref{ncycb} is defined as $\mathcal{CM}(A,S2)-\mathcal{CM}(A,S1)$ (which is $2$ in our example). Note that edge costs in the allotment subspace graph may be negative in some cases.

\subsubsection{Using allotment subspace  multigraph for improving the solution}
The allotment subspace  multigraph helps us to improve the current solution in following two ways: (a) Negative cycle removal and, (b) Negative path removal. It is important to note that these two are not completely independent of each other. Before removing negative path, we must first ensure that there are no negative cycles in the allotment subspace multigraph (our algorithm ensures this). We would now discuss these ideas at conceptual level. Details on their implementation are given in Section \ref{ncalgo} and Section \ref{npalgo}.

\noindent \textbf{Adjustment via Negative Cycle Removal:} Consider the cycle shown in Figure \ref{ncycc}. This directed cycle appears in the allotment subspace multigraph of the allotment shown in Figure \ref{ncyca}. Total cost of this cycle happens to be $-13$. In other words, we were to we adjust our allotment (shown in Figure \ref{ncyca}) by transferring the demand units corresponding to the edges in this cycle (e.g., transfer $B$ to S2, transfer $C$ to S3, transfer $D$ to S4, transfer $E$ to S1) then, the objective function value would decrease by $13$ units.  

\begin{figure}[h]  
\begin{center} 
\subfigure[An arbitrary allotment.]{\label{npa}\includegraphics[width=68mm]{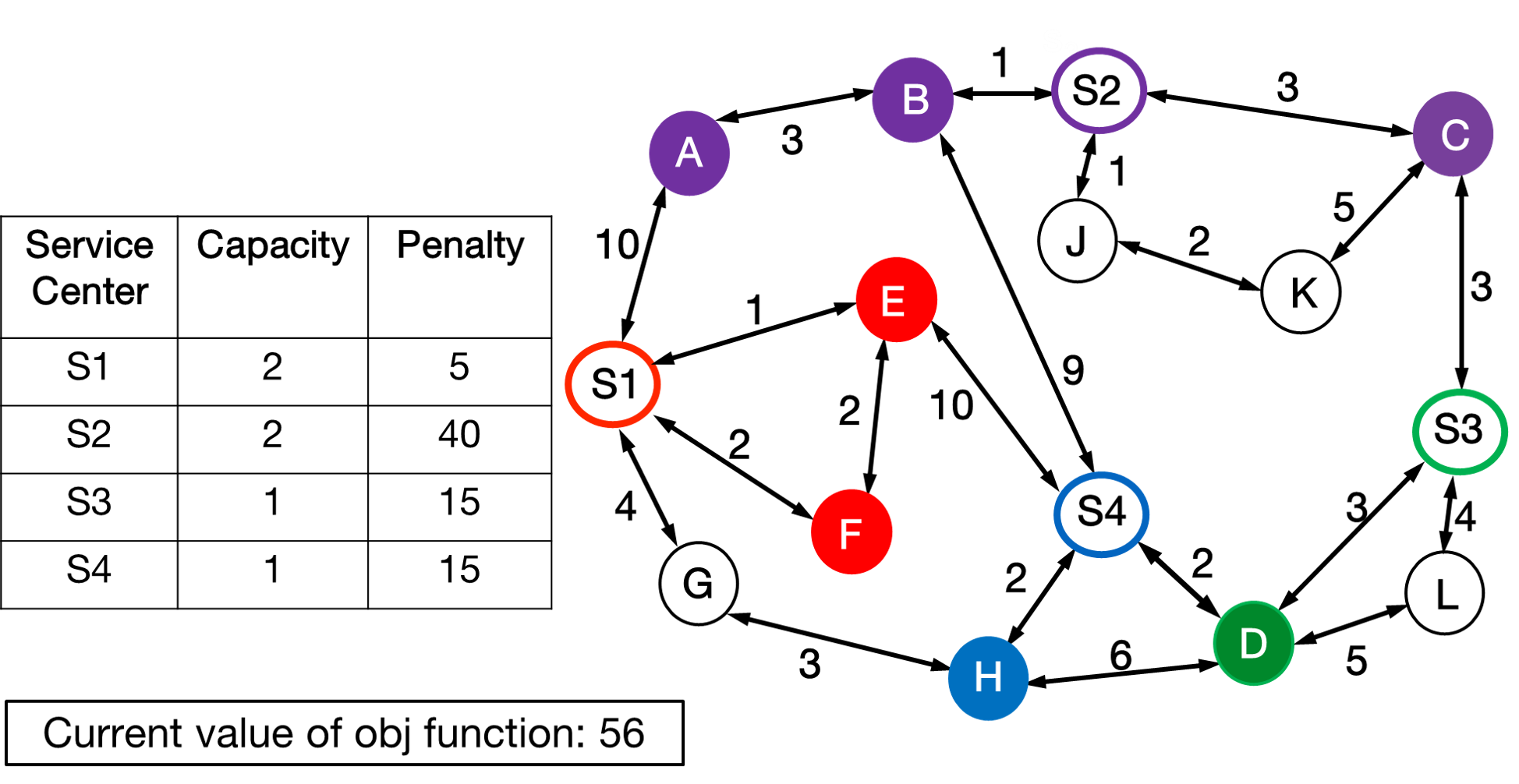}}
\subfigure[A negative path in the allotment sub-space graph]{\label{npb}\includegraphics[width=55mm]{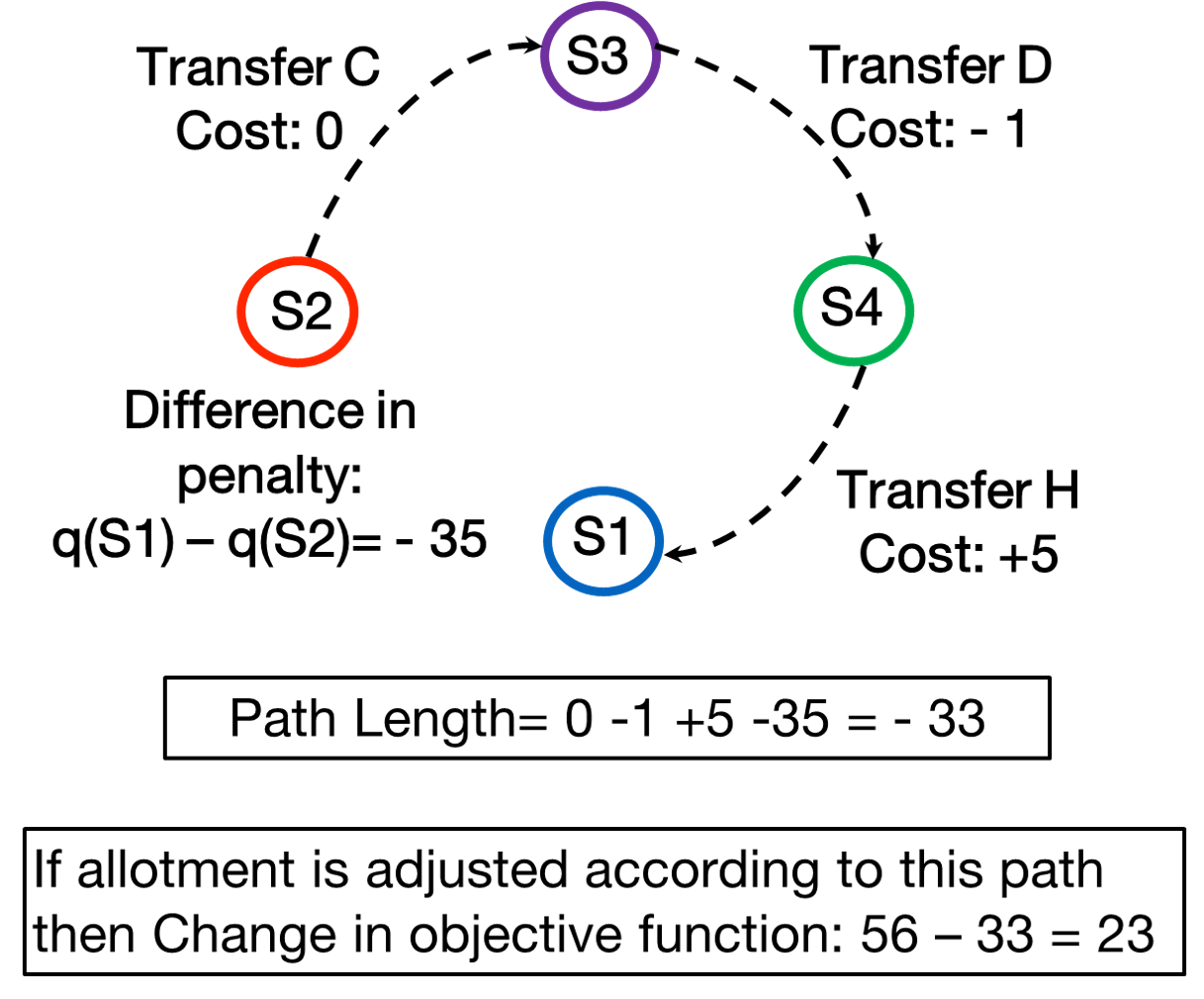}}
\end{center}
\caption{Illustrating negative paths in allotment sub-space multi-graph. Vertices S1, S2, S3 and S4 are service centers. demand nodes allotted to a service center are shaded using the same color as that of the service center.}
\vspace{-2mm}   
\end{figure} 

\noindent \textbf{Adjustment via Negative Path Removal:}  Consider the arbitrary allotment shown in Figure \ref{npa}. Similar to previous case, we used an arbitrary allotment (rather than following a procedure) to help in easy illustration of the negative path concept. Moreover, for sake for easy understanding, problem instance shown in Figure \ref{npa} assumes that penalty functions are just integers instead of monotonic functions as defined in Definition \ref{pdef}.

Figure \ref{npb} illustrates a path in the corresponding allotment subspace multigraph of the allotment shown in Figure \ref{npa}. On this path, we transfer demand unit $C$ to S3, $D$ to S4 and finally, transfer $H$ to S1.  Also note that S2 was initially overloaded (in Figure \ref{npa}). Thus, if we are to adjust the allotment according the mentioned path, then the total penalty paid at S2 would decrease since it is transferring a demand unit ($C$) to another service center. On the other hand, we now have to pay the penalty at S1 as its accepting a demand unit beyond its capacity. Following is an expression of the total cost of this path:

\noindent Cost of Path = TransferCost(C~to~S3) + TransferCost(D~to~S4) + 

\hspace{1.5cm} TransferCost(H~to~S1) + Penaltyat(S1) - Penaltyat(S2)

\hspace{1.5cm} = \textbf{-33}      

Thus, if we adjust our initial allotment (shown in Figure \ref{npa}) according to the path shown in Figure \ref{npb}, then objective function value would decrease by 33. Note that, in our implementation, we do not re-create the $\Gamma()$ after each iteration. We rather maintain it by making incremental changes to the multi-graph. More details are provided later.

 \subsection{Procedure for Removing Negative Cycles}
 \label{ncalgo}
The allotment subspace multigraph ($\Gamma(\mathcal{A})$) of an arbitrary allotment $\mathcal{A}$ can have multiple negative cycles. Removing all of them would be computationally intensive. To this end, in our actual implementation of this concept, we adopt an incremental approach wherein we add demand units to the allotment one by one. And after each new allotment, we determine and remove the negative cycles which could have been created due to this addition.  

More specifically, let $\mathcal{A}$ be the current allotment, and an unprocessed demand unit $d_i$ is now added to a service center $s_j$. Let the resulting allotment be $\mathcal{A'}$. Since the only modification to $\Gamma(\mathcal{A})$ is on the outgoing edges from $s_j$, if $\Gamma(\mathcal{A})$ did not have negative cycles, then any negative cycle in $\Gamma(\mathcal{A'})$ must pass through $s_j$. In Lemma \ref{mainlemma}, we show that if we remove the most negative cycle from $\Gamma(\mathcal{A'})$, then the resulting allotment $\mathcal{A''}$ would be such that we would not have have negative cycles in $\Gamma(\mathcal{A''})$. To compute the most the negative cycle, we identify the ``distinguished'' service center ($s_j$ in our example) and create a variant of the allotment subspace multigraph, called  the \emph{NegCycle} allotment subspace graph. These concepts are formalized in Definition \ref{mincycgamma}.

\begin{definition}
Let an unprocessed demand unit $d_i$ be added to a service center $s*$. Let the resulting allotment be $\mathcal{A}$ and its allotment subspace multigraph be $\Gamma(\mathcal{A})$. The NegCycle allotment subspace graph, $NCyc(\Gamma(\mathcal{A}), s*) = (P,Q)$, is defined as follows:\\ 
\textbf{ (1) Set of nodes} ($P$): Each service center $s_l \in S-{s*}$ is a node in $P$. Service center $s*$ (referred to as the {\em anchor} or {\em distinguished} service center) is represented by two nodes $s*^{in}$ and $s*^{out}$. \\
\textbf{ (2) Set of edges} ($Q$): Cost of the edges between any two nodes $u$ , $v$ $\in P$ is defined as follows:  \\
        (a) if $u \notin \{s*^{in},s*^{out}\}$ and $v \notin \{s*^{in},s*^{out}\}$, then $cost(u,v)$ is the minimum of the costs of all the edges which are directed from $u$ to $v$ in $\Gamma(\mathcal{A})$ (ties are broken arbitrarily). \\
        (b) if $u = s*^{out}$, then $cost(u,v)$ is the minimum of the costs of all the edges which are directed from $s*$ to $v$ in $\Gamma(\mathcal{A})$ (ties are broken arbitrarily). \\
        (c) if $v = s*^{in}$, then $cost(u,v)$ is the minimum of the costs of all the edges which are directed from $u$ to $s*$ in $\Gamma(\mathcal{A})$ (ties are broken arbitrarily).\\
        (d) all other edges have $0$ cost. 
\label{mincycgamma} 
\end{definition}

After the creating of the NegCycle allotment subspace graph ($NCyc\Gamma()$), we determine the lowest cost path between the "in" and "out" copies ($s_j^{in}$ and $s_j^{out}$) of the anchor service center. Note that since both $s_j^{in}$ and $s_j^{out}$ map to $s_j$ in $\Gamma(\mathcal{A})$, this path gives us the cycle of least cost passing through $s_j$ in $\Gamma(\mathcal{A})$. For computing the lowest cost path, we use Bellman Ford's label correcting algorithm. Adjusting the current allotment $\mathcal{A}$ according to the edges of this lowest cost path removes the negative cycle. Note that we cannot use a label setting approach (e.g., Dijkstra's) for finding lowest cost path in $NCyc\Gamma()$ as it may contain negative edges. A detailed pseudo-code of the procedure is given in Algorithm \ref{algrnegc}. 



\begin{algorithm}[h]
\caption{Negative Cycle Refinement}
\label{algrnegc}
\begin{flushleft}
\noindent \textbf{Input:} (a) Current allotment $\mathcal{A}$; (b) Anchor service center $SC_\theta$ \\
\noindent \textbf{Output} Resulting allotment $\mathcal{B}$ after adjustment in such that there are no negative cycles in its allotment subspace.
\end{flushleft}
\begin{algorithmic}[1]
\STATE Create the NegCycle allotment subspace graph $NCyc\Gamma(\mathcal{A},SC_\theta)$ with $SC_\theta$ as the anchor vertex.
\STATE Let $w_{out}$ and $w_{in}$ be ``IN'' and ``OUT'' copies of the anchor vertex $SC_\theta$ in $minCyc\Gamma(\mathcal{A})$.
\STATE Find lowest cost path $P_{min}$ in $NCyc\Gamma(\mathcal{A},SC_\theta)$ from $w_{out}$ to $w_{in}$
\IF{Cost of $P_{min} < 0$}
\FOR {each demand node transfer $<s_{old},dn,s_{new}>$ $\in P_{min}$} 
\STATE $\mathcal{B} \leftarrow$ change $\mathcal{A}$ by removing $dn$ from $s_p$ and adding it to $s_q$. \State Update $\Gamma DS$ 
\STATE $\mathcal{A} \leftarrow \mathcal{B}$
\STATE $\Delta \leftarrow$ $\Delta~+$  Cost of corresponding edge in $P_{min}$ /* $\mathcal{CM}(dn,s_p) - \mathcal{CM}(dn,s_q)$ */
\ENDFOR
\ENDIF
\STATE Return new allotment $\mathcal{B}$
\end{algorithmic}
\end{algorithm}

\subsection{Procedure for Removing Negative Path} 
\label{npalgo}
Consider a case where an unprocessed demand unit $d_i$ is added to a service center $s_j$ in some iteration of the main allotment algorithm, and this assignment leads to penalty at $s_j$. Let the resulting allotment be $\mathcal{B}$. As explained earlier, the corresponding $\Gamma(\mathcal{B})$ may have a negative cycle passing through $s_j$. We first remove the negative cycle from $\Gamma(\mathcal{B})$ using the procedure described in the previous section. Let the resulting allotment be $\mathcal{B'}$. Now, we attempt to decrease the value of the objective function by looking for the most negative path which starts at $s_j$ from $\Gamma(\mathcal{B'})$. This is done via the \emph{NegPath allotment subspace graph} defined next.

\begin{definition}
Let an unprocessed demand node $d_i$ be added to a service center $s*$. Let $\mathcal{B}3$ be the resulting allotment after removing the negative cycle from $\Gamma(\mathcal{B})$. The NegPath allotment subspace graph $NPath\Gamma(\mathcal{B}3,s*) = (R,S)$ is defined as follows:\\
    (1) \textbf{Set of nodes} ($R$): Each service center $s_i \in S-{s*}$ is a node in $P$. Service center $s*$ (referred to as the anchor service center) is represented by two nodes $s*_{in}$ and $s*_{out}$. \\
    (2)  \textbf{Set of edges} ($S$): Cost of the edges between any two nodes $u$ , $v$ $\in Q$ is defined as follows: \\
        (a) if $u \notin \{s*_{in},s*_{out}\}$ and $v \notin \{s*_{in},s*_{out}\}$, then $cost(u,v)$ is the minimum of the costs of all the edges which are directed from $u$ to $v$ in $\Gamma(\mathcal{B}')$ (ties are broken arbitrarily).\\
        (b) if $u = s*_{out}$, then $cost(u,v)$ is the minimum of the costs of all the edges which are directed from $s*$ to $v$ in $\Gamma(\mathcal{B}3)$ (ties are broken arbitrarily). \\
        (c) if $v = s*_{in}$, then $cost(u,v)$ = penalty at $u$ ($q_u()$) -  penalty at $s*$ ($q_{s*}()$). \\
         (d) if $u$ and $v$ are same then $cost(u,v)=0$, i.e., self loops have $0$ cost.\\
\label{minpathgamma} 
\end{definition}

After the creating of the NegPath allotment subspace graph ($NPath\Gamma()$), we determine the lowest cost path between the ``in'' and ``out'' copies ($s*_{in}$ and $s*_{out}$) of the anchor service center. For computing the lowest cost path, we use Bellman Ford's label correcting algorithm as $NPath\Gamma()$ can contain edges with negative weights. Adjusting the current allotment (say $\mathcal{B}3$) according to the edges of this lowest cost path removes this negative path. Once the most negative path is removed, we are ready to add the next demand unit. However, we need to be assured that removing the negative path did not introduce any negative cycles. In Lemma \ref{mainlemma2}, we show that if we remove the most negative path from $\Gamma(\mathcal{B}3)$, then the resulting allotment, $\mathcal{B}4$, would be such that we would not have have negative cycles in $\Gamma(\mathcal{B}4.)$ A detailed pseudo-code of the procedure is given in Algorithm \ref{algrnegp} (Appendix \ref{negpalgo}). It is very similar to the pseudo-code of negative cycle refinement.

\subsection{Details of the ASRAL algorithm}
\label{algodetails} 
This section details the operationalization of our previously discussed ideas of allotment subspace re-adjustments via negative cycle and negative paths. We use following data structures in our algorithm.\\
 
\noindent \textbf{Demand-service cost matrix} $\mathcal{CM}$ \textbf{construction}: Depending on the nature of the underlying problem space, one can determine the $\mathcal{CM}$ in following ways. In case both our service centers and demand units are embedded in road networks, one can determine $\mathcal{CM}$ by using Floyd-Warshall algorithm \cite{Cormen:2001:IA:580470}. One can also use other shortest path algorithms (e.g., \cite{phast,goldberg2007better,geisberger2008contraction,bast2007fast,jing1998hierarchical}) for this purpose. On the other hand, if the service center and demand units are spread over continuous geographic space, then $\mathcal{CM}$ can be computed using euclidean distance or geodetic distance.  \\


\noindent \textbf{Allotment subspace multi-graph implementation ($\Gamma DS$)} In our implementation, $\Gamma(\mathcal{A})$ is maintained using using $2\times$ ${k}\choose {2}$ number of minheaps, where $k$ is the number of service centers. Basically, one heap for each ordered pair of service centers. This heap, \emph{referred to as BestTransHeap}, is ordered on the transfer cost of demand units (across the pair of service centers). For any order pair of service centers <$s_i,s_j$>, the top of its corresponding $BestTransHeap_{ij}$ would contain the demand unit (which is currently allotted to $s_i$) which has the lowest transfer cost to $s_j$. During the course of the algorithm whenever a demand unit is added or deleted from a service center $s_i$, then $k-1$ number of heaps corresponding to $s_i$ are updated accordingly. 

Both $NCyc(\mathcal{A},s*)$ and $NPath\Gamma(\mathcal{A},s*)$ can be created from $\Gamma(\mathcal{A})$. In case of $NCyc\Gamma(\mathcal{A},s*)$, edge between service centers $s_i$ and $s_j$ is essentially, the top of $BestTransHeap_{ij}$ heap. Edges in $NPath\Gamma(\mathcal{A},s*)$ are also constructed in a similar fashion with an exception for the penalty difference edges (refer Definition \ref{minpathgamma}) which can be added in $O(1)$ time. In our implementation, both $NCyc(\mathcal{A},s*)$ and $NPath\Gamma(\mathcal{A},s*)$ are created once, and are maintained thereafter with each demand unit addition, negative cycle and negative path refinements. \\


Algorithm \ref{mainalgo} presents a pseudo-code of our proposed \textit{A}llotment \textit{S}ubspace \textit{R}e-adjustment based \textit{A}pproach for \textit{L}BDD  (\emph{ASRAL}) algorithm. We start our algorithm by determining the ``most suitable'' service center (using $\mathcal{CM}$) for each of the demand unit $d_i$. The ``most suitable'' service center $s_{min}$ for a demand unit $d_i$ is defined as the service center which has the lowest assignment cost for $d_i$. Following this, we insert the tuples $<$demand unit $d_i$, its corresponding $s_{min}>$ into a heap (referred to as the \textit{MinDistance heap}). This heap is ordered on the assignment cost (to their respective $s_{min}$) of the tuples. In each iteration, the demand unit to be processed next is picked from the MinDistance heap using the extract-min operation.

Let the result of extract-min operation be a demand unit ${d_i}$ and its ``most suitable'' service center be $s_j$. $d_i$ is allotted to $s_j$. Following this, the resultant allotment (let it be $\mathcal{A}2$) is updated according to following two cases: (a) $s_j$ had free capacity for the allotment or, (b) $s_j$ was already full. 

\emph{Case(a):} we increment the objective function ($\Delta$) by the amount $\mathcal{CM}(d_i,s_j)$. Following this, we remove any negative cycle (from $\Gamma(\mathcal{A}2)$) that could have been created by the recent allotment (using Algorithm \ref{algrnegc}).

\emph{Case(b):} we increment the objective function ($\Delta$) by the amount $\mathcal{CM}(d_i,s_j)+q_{s_i}()$. Here, $q_{s_i}()$ is the penalty function of service center $s_i$. Following this, we first remove any negative cycle that could have been created by the recent allotment (using Algorithm \ref{algrnegc}). Let $\mathcal{A}3$ be the resulting allotment. After this, we remove the most negative path which originates from $s_i$ (called as anchor vertex in the Algorithm) in $\Gamma(\mathcal{A}3)$. Algorithm \ref{algrnegp} is used for this purpose.

\begin{algorithm}[ht]
\caption{Allotment Subspace Re-adjustment based approach for LBDD}
\label{mainalgo}
\begin{flushleft}
\noindent \textbf{Input:} (a) Demand-service cost matrix $\mathcal{CM}$; (b) A set of service centers $S$. For each service center $s_i$, we have its capacity $c_{s_i}$ and penalty function $q_{s_i}()$; (c) A set of demand vertices  $D$; \\
\noindent \textbf{Output} A mapping from demand units to the service centers.
\end{flushleft}
\begin{algorithmic}[1]
\STATE For each $d_i \in D$ determine the ``most suitable'' service center and create a MinDistance heap with all $<$demand unit-best service center$>$ pairs
\STATE Initialize objective function $\Delta \leftarrow 0$
\STATE Initialize current allotment $i = 0$; $\mathcal{A}1 \leftarrow$ $NULL$; Initialize $\Gamma DS$     /*{\em Precondition: $\Gamma(\mathcal{A}1)$ has no negative cost cycles }*/
\WHILE {MinDistance heap is not empty}   
\STATE $i++$; $<{d_i}$, $s_j>$ $\leftarrow$ extract-min on MinDistance heap
\STATE $\mathcal{A}2 \leftarrow$ Change $\mathcal{A}1$ by allotting $d_i$ to $s_j$. Update $\Gamma DS$ 
\STATE $\mathcal{A}3 \leftarrow$ Negative Cycle refinement($\mathcal{A}2$,$s_j$) (using Algo \ref{algrnegc}) \newline  /*{\em Invariant: $\mathcal{A}3$ has no negative cost cycles }*/
\IF {\textit{$s_j$} has vacancy}
\STATE Decrement capacity of $s_j$
\STATE $\Delta$ $\leftarrow$ $\Delta + \mathcal{CM}(d_i,s_j)$  /*Increase objective function */
\STATE $\mathcal{A}1$ $ \leftarrow$ $\mathcal{A}3$
\ELSE /*Allotment leads to penalty at $s_j$ */
\STATE $\Delta$ $\leftarrow$ $\Delta + \mathcal{CM}(d_i,s_j) + q_{s_j}()$  
\STATE $\mathcal{A}4$ $\leftarrow$ Negative Path refinement($\mathcal{A}3$,$s_j$) (using Algo \ref{algrnegp}) \newline  /*{\em Invariant: $\Gamma(\mathcal{A}4)$ has no negative cost cycles }*/ 
\STATE $\mathcal{A}1$ $\leftarrow \mathcal{A}4$ 
\ENDIF
\ENDWHILE
\end{algorithmic}
\end{algorithm} 

\subsection{Para-ASRAL Algorithm}
\label{pasral}
This section presents our proposed Para-ASRAL algorithm which uses GPU and CPU cores to speed-up the execution. We achieve this speed-up through the following: (a) using a parallel Bellman Ford algorithm inside our negative cycle refinement and negative path refinement procedures; (b) use threads to update $\Gamma DS$ during the sub-space readjustment. We now give present details of these techniques.


\subsubsection{\textbf{Parallel Bellman Ford algorithm}}
\label{negpath}
Both negative cycle refinement and negative path refinement procedures use the Bellman ford algorithm to compute the lowest cost path $P_{min}$ in $NCyc\Gamma()$ and $NPath\Gamma()$ respectively. However, this process is time consuming given its complexity in our case. More precisely, its complexity turns to out to be $O(k^3)$ in our case when we set $|E|=k^2$ and $|v|=k$ in ($O(|E||V|)$ (time complexity of the Bellman ford algorithm).

We implement a parallel version of the Bellman Ford algorithm where the inner loop of the Bellman's algorithm (which loops over all the edges) is parallelized using a GPU. We implemented this parallel version using JCuda \cite{jcuda} which is a java binding of the CUDA Runtime libraries and allows us to run java code on GPU. A detailed pseudo-code of the procedure is given in Algorithm \ref{bfalgo}.


\begin{algorithm}[ht]
\caption{Parallel Bellman ford algorithm for finding negative path}
\label{bfalgo}
\begin{flushleft}
\noindent \textbf{Input:} (a) An adjacency matrix representation of $NCyc\Gamma()$ (or $NPath\Gamma()$); (b) A distance array of size $k$; (c) A parent array of size $k$; (d) A source vertex (anchor vertex) \\
\noindent \textbf{Output:} It returns the output in distance and parent array. The distance array contains the distance of shortest path from source to all other nodes of the given graph. The parent array contains the parent node of shortest path from source. 
\end{flushleft}
\begin{algorithmic}[1]
\STATE Initialize the distance array. Put $\infty$ for all nodes except the source node and put 0 for source node.
\STATE Initialize the parent array. Put $-1$ for all nodes. /*{Both arrays are on CPU i.e. Host memory}*/
\STATE Allocate memory for three arrays (matrix representation of $NCyc\Gamma()$ or $NPath\Gamma()$, distance, and parent) on GPU with the same size as host memory using JCuda. /*{i.e. device memory}*/
\State Transfer host memory data to device memory (i.e., GPU) using JCuda
\FOR {each node of the graph}
        \STATE Relax all edges using GPU (inner loop of Bellman ford)
        \STATE Synchronize GPU /*{wait for the completion}*/
\ENDFOR
\STATE Transfer device memory data to host memory using JCuda
\STATE Free all GPU memory
\end{algorithmic}  
\end{algorithm}

Parallel Bellman ford algorithm takes the following input: 1) An adjacency matrix representation of $NCyc\Gamma()$ (or $NPath\Gamma()$), 2) distance array,  3) parent array and 4) a source node (anchor vertex). We allocate space for input data structures on the CPU main memory (referred to as host memory) and initialize them. Using JCuda, we also allocate space for our data structures on the GPU memory (referred to as the device memory). Data containing initial values of our data structures is transferred from the host memory to the device memory. After the computation, the final results are transferred back to the host memory. 

Inside the Bellman ford algorithm, output of $(i+1)^{th}$ iteration depends on $i^{th}$ iteration. However, the relaxation of edges inside each iteration (which is the inner loop of Bellman's algorithm) is an independent operation. This allows us to take advantage of the GPU cores. We run the relaxation of all edges (i.e., the inner loop of Bellman) part on GPU using JCuda. Because of the dependency across the iterations of the outer loop, we need to synchronize the GPU after each iteration otherwise, we will not get accurate results. 

After all the iterations are over (i.e., the for loop at line 5 in Algo \ref{bfalgo}), we transfer the data from the device memory to the host memory and free all the device memory.

\subsubsection{\textbf{Updating $\Gamma D S$ in parallel}}
\label{cascade}
After finding the most negative path $P_{min}$ in both negative cycle refinement (Line 3 in Algo \ref{algrnegc}) and negative path refinement procedure (line 3 in Algo \ref{algrnegp}), the ASRAL algorithm adjusts the allotment and $\Gamma D S$ according to $P_{min}$ path. Each edge $e \in P_{min}$ represents as a tuple $<s_{old},dn,s_{new}>$ where we transfer a demand unit $dn$ from service center $s_{old}$ to $s_{new}$. This process involves updates to both the allotment and the $\Gamma D S$. 

 
Updating allotment according $P_{min}$ involves involves moving the demand units across service centers according the edges $<s_{old},dn,s_{new}>$ present in the path $P_{min}$. As an example consider the following sample $P_{min}$ with two edges: $<s_1,x,s_2>$;$<s_2,y,s_3>$. Here, demand unit $x$ would be transferred from $s_1$ to $s_2$ and, demand unit $y$ would be transferred from $s_2$ to $s_3$. Clearly, these two operations are trivially independent of each other and can be done in parallel. However, this aspect was not explored in this work with the intention of using threads in the slightly more involved step of updating $\Gamma D S$ (described next).

After the allotment is updated, we need to also update $\Gamma D S$ for the service centers involved in $P_{min}$. Consider again the previously mentioned sample $P_{min}$ with two edges: $<s_1,x,s_2>$;$<s_2,y,s_3>$. Following operations are needed for the first edge, i.e., $x$ moving from $s_1$ to $s_2$. Demand node $x$ needs to be removed from $k-1$ heaps corresponding to $s_1$ in the $\Gamma D S$ and, then it needs to be added to $k-1$ heaps corresponding to $s_2$. Note that $k$ is the number service centers in the given problem. All these update operations to these $2k-2$ heaps are independent to each other and are done in parallel using threads. Similar process is repeated for all the other edges in $P_{min}$.

\section{Analytical Evaluation}
\label{ana}
\subsection{Theoretical Analysis}

   
In this subsection, we now prove key Lemmas that establish the correctness of Algorithm \ref{mainalgo}. The correctness depends on the correctness of the in-variants after Steps 7 and 14.
      
\textbf{For Step 7:} We detect the most negative cycle in $\Gamma(\mathcal{A}2)$, and if that cycle has a cost less than zero, modify the allotment to remove the cycle. The resulting allotment is $\mathcal{A}3$, and we show that $\Gamma(\mathcal{A}3)$ does not contain any negative cost cycles.
     
\textbf{For Step 14:} We detect the most negative path, originating in $s_j$, from $\Gamma(\mathcal{A}3)$, and if that path has a cost less than zero, modify the allotment to remove the path. The resulting allotment is $\mathcal{A}4$ and we show that $\Gamma(\mathcal{A})4$ does not contain any negative cost cycles. 
 
\begin{lemma}
Consider the allotment $\mathcal{A}1$ at the start of Step 6 of the Algorithm \ref{mainalgo}. If $\Gamma(\mathcal{A}1)$ does not have any negative cost cycles then $\Gamma(\mathcal{A}3)$ (at the end of step 7) does not have any negative cost cycles.
\label{mainlemma}
\end{lemma}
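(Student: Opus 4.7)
The plan is to argue by contradiction: suppose $\Gamma(\mathcal{A}3)$ contains a negative cycle $C'$, and use it together with the cancelled cycle $C^*$ to produce a cycle of $\Gamma(\mathcal{A}2)$ whose cost is strictly smaller than $\mathrm{cost}(C^*)$, violating the choice of $C^*$ as the minimum-cost cycle returned by Algorithm \ref{algrnegc}. Two structural observations drive everything. First, going from $\mathcal{A}1$ to $\mathcal{A}2$ only introduces new out-edges at $s_j$ corresponding to potential transfers of the newly allotted $d_i$, so any cycle of $\Gamma(\mathcal{A}2)$ that avoids these new edges is already a cycle of $\Gamma(\mathcal{A}1)$ and hence has non-negative cost; in particular, $C^*$ itself must use the $d_i$-edge, and the minimum-cost cycle through $s_j$ that Algorithm \ref{algrnegc} finds is in fact the minimum-cost cycle in all of $\Gamma(\mathcal{A}2)$. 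Second, the telescoping identity $\mathcal{CM}(d,s_c) - \mathcal{CM}(d,s_a) = \bigl[\mathcal{CM}(d,s_b) - \mathcal{CM}(d,s_a)\bigr] + \bigl[\mathcal{CM}(d,s_c) - \mathcal{CM}(d,s_b)\bigr]$ lets us compose two consecutive moves of the same demand unit into a single edge of $\Gamma(\mathcal{A}2)$ whose cost is exactly the sum of the two original costs.

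Let $\mathcal{A}4$ be the allotment produced by applying $C'$ to $\mathcal{A}3$; its cost satisfies $\mathrm{cost}(\mathcal{A}4) = \mathrm{cost}(\mathcal{A}2) + \mathrm{cost}(C^*) + \mathrm{cost}(C') < \mathrm{cost}(\mathcal{A}2) + \mathrm{cost}(C^*)$. I would then build the multi-digraph $H$ on the set of service centres whose edges are, for each demand unit $d$ with $\mathcal{A}2(d) \neq \mathcal{A}4(d)$, the single net-displacement arc $(\mathcal{A}2(d), \mathcal{A}4(d), d)$ with cost $\mathcal{CM}(d,\mathcal{A}4(d)) - \mathcal{CM}(d,\mathcal{A}2(d))$ taken in $\Gamma(\mathcal{A}2)$. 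Because $C^*$ and $C'$ each preserve the load at every service centre, $\mathcal{A}2$ and $\mathcal{A}4$ induce identical loads, so $H$ is balanced and admits an edge-disjoint decomposition into directed cycles $D_1,\dots,D_p$, each of which (by the telescoping identity, when a demand unit is moved by both $C^*$ and $C'$) is a legitimate cycle of $\Gamma(\mathcal{A}2)$ and satisfies $\sum_i \mathrm{cost}(D_i) = \mathrm{cost}(C^*) + \mathrm{cost}(C')$.

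To finish, note that $d_i$ contributes at most one edge to $H$, so at most one decomposition cycle---call it $D^*$---uses $d_i$; every other $D_j$ is labelled only by demand units that were already assigned in $\mathcal{A}1$, so it is in fact a cycle of $\Gamma(\mathcal{A}1)$ and therefore has $\mathrm{cost}(D_j) \ge 0$ by hypothesis. If $d_i$ contributes no edge at all, then $\sum_i \mathrm{cost}(D_i) \ge 0$, directly contradicting $\mathrm{cost}(C^*) + \mathrm{cost}(C') < 0$; otherwise, dropping the non-negative summands gives $\mathrm{cost}(D^*) \le \sum_i \mathrm{cost}(D_i) < \mathrm{cost}(C^*)$, so $D^*$ is a cycle of $\Gamma(\mathcal{A}2)$ strictly more negative than $C^*$, again contradicting its minimality. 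The main technical obstacle is the rigorous identification of $H$-edges with actual edges of $\Gamma(\mathcal{A}2)$ when a demand unit is moved by both $C^*$ and $C'$ (so that its $\mathcal{A}2$- and $\mathcal{A}3$-positions disagree); the telescoping identity above isolates precisely this difficulty and makes the accounting go through cleanly.
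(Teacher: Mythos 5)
Your argument is correct and is essentially the paper's own proof: your net-displacement multigraph $H$ built via the telescoping identity is exactly the paper's ``dependency-nullified'' digraph $H'=C_{min}\cup C_{neg}$, and both proofs then decompose the balanced digraph into cycles, isolate the unique cycle carrying the $d_i$-edge, and play the minimality of $C^*$ in $\Gamma(\mathcal{A}2)$ against the non-negativity of cycles in $\Gamma(\mathcal{A}1)$. The only cosmetic difference is that you land the contradiction on the minimality of $C^*$ (and explicitly handle the corner case where $d_i$'s net displacement vanishes), whereas the paper lands it on the hypothesis about $\Gamma(\mathcal{A}1)$; the ingredients are identical.
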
    

\noindent \textbf{Proof Strategy:} The only new edges added to $\Gamma(\mathcal{A}1)$ (to create $\Gamma(\mathcal{A}2)$) are due to addition of $d_i$ and are all leading out of $s_j$. These edges move $d_i$ to each of the other service centers. Any negative cycle introduced in $\Gamma(\mathcal{A}2)$ must therefore contain an edge leading out of $s_j$, otherwise it would imply that $\Gamma(\mathcal{A}1)$ had negative cycles at the start of Step 5. Let $P_{min}$ be the path determined by the negative cycle refinement algorithm (Algorithm \ref{algrnegc}) between from $w_{out}$ (out copy of $s_j$) to $w_{in}$ (in copy of $s_j$). 

Let $C_{min}$ be the cycle formed in $\Gamma(\mathcal{A}3)$ by considering the edges in $P_{min}$ and collapsing the ``in'' and ``out'' copies of the $s_j$ ($w_{in}$ and $w_{out}$ in Algorithm \ref{algrnegc}). Note that this cycle is trivially guaranteed to exist as $NCyc(\Gamma(\mathcal{A}), s_j)$ is essentially created by breaking $s_j$ into its ``in'' and ``out'' copies. Also note that if cost of $C_{min}$ is greater than zero, then the statement is trivially correct. Thus, we assume that cost of $C_{min}$ is less than zero. 

Overall, we prove our lemma using proof by contradiction, i.e., if  $\Gamma(\mathcal{A}3)$ has a negative cycle (say $C_{neg}$) then there must have been a negative cycle in  $\Gamma(\mathcal{A}1)$. We shall first present some essential concepts (and an example) before presenting details of the proof. 

Our proof uses edges from $C_{min}$ and $C_{neg}$ to find a cycle, $C_{cont}$, in $\Gamma(\mathcal{A}1)$ with total cost less than zero. $C_{min}$ consists of edges from  $\Gamma(\mathcal{A}2)$ and $C_{neg}$ consists of edges from $\Gamma(\mathcal{A}3)$. We perform a two-step process to find $C_{cont}$: 
 
(1) {\em Nullify edge dependencies:} Two edges, $e_1$ and $e_2$, are termed as {\bf dependent} if they have the form $e_1 = (s_p, s_q, d_x)$ and $e_2 = (s_q, s_r, d_x)$, i.e., $e_1$  transfers  $d_x$ from $s_p$ to $s_q$ and $e_2$ transfers the same demand unit $d_x$ from $s_q$ to $s_r$. Note that a dependency can happen only when $e_1$   and $e_2$ belong to different allotment subspace multigraphs. This dependency is {\bf nullified} by removing $e_1$ and $e_2$, and adding the edge $e_3 = (s_p, s_r, d_x)$. Note that the cost of $e_3$ is the sum of the costs of $e_1$ and $e_2$, i.e., {\em the total cost remains unchanged when dependencies are nullified}. For finding $C_{cont}$ we take the edges in $C_{min}$ and $C_{neg}$ and nullify all dependencies. Once dependencies are nullified, the resulting set of edges has the property that any demand unit, $d_x$ ($x\neq i$), is moved at most once, and this movement is from the service center node to which $d_x$ was assigned in $\mathcal{A}1$. Therefore all these edges are from $\Gamma(\mathcal{A}1)$, except for the new edge in $C_{min}$ that moves $d_i$ out of $s_j$.

(2) {\em Identify and exclude  a cycle containing the new edge that moves $d_i$ out of $s_j$.} If we exclude the edges on this cycle,  all the remaining edges must be from $\Gamma(\mathcal{A}1)$. As we shall see, the negative cycle, $C_{cont}$, can be found among these remaining edges.

Figure \ref{lemfig} shows an example of our proof strategy. $S^* $ is the service center to which the new demand unit, $d_i$, is added. In the Figure, the edges $(S1, Sa)$ and $(S4, Sd)$ in $C_{neg}$ are dependent, respectively, on edges $(S^*, S1)$ and $(S3, S4)$ belonging to $C_{min}$. While nullifying these dependencies, we add the edges $(S^*, Sa)$ and $(S3, Sd)$. This leaves us with two independent cycles: an {\em inner cycle} $S1 \rightarrow S2 \rightarrow S3 \rightarrow Sd \rightarrow S1$, and an {\em outer cycle}, $S^* \rightarrow Sa \rightarrow Sb \rightarrow Sc \rightarrow S4 \rightarrow S5 \rightarrow S^*$. The  edge $(S^*, S1)$, which moved $d_i$ to  $S1$ is removed and the   edge  $(S^*, Sa)$ is introduced, which moves $d_i$ to  $Sa$.  All the edges in the inner cycle $S1 \rightarrow S2 \rightarrow S3 \rightarrow Sd \rightarrow S1$ belong to $\Gamma(\mathcal{A}1)$. The outer cycle $S^* \rightarrow Sa \rightarrow Sb \rightarrow Sc \rightarrow S4 \rightarrow S5 \rightarrow S^*$ contains the  only edge that moves $d_i$, and all these edges are present in $\Gamma(\mathcal{A}2)$. 

\begin{figure}[h]
\begin{center}
\includegraphics[width=60mm]{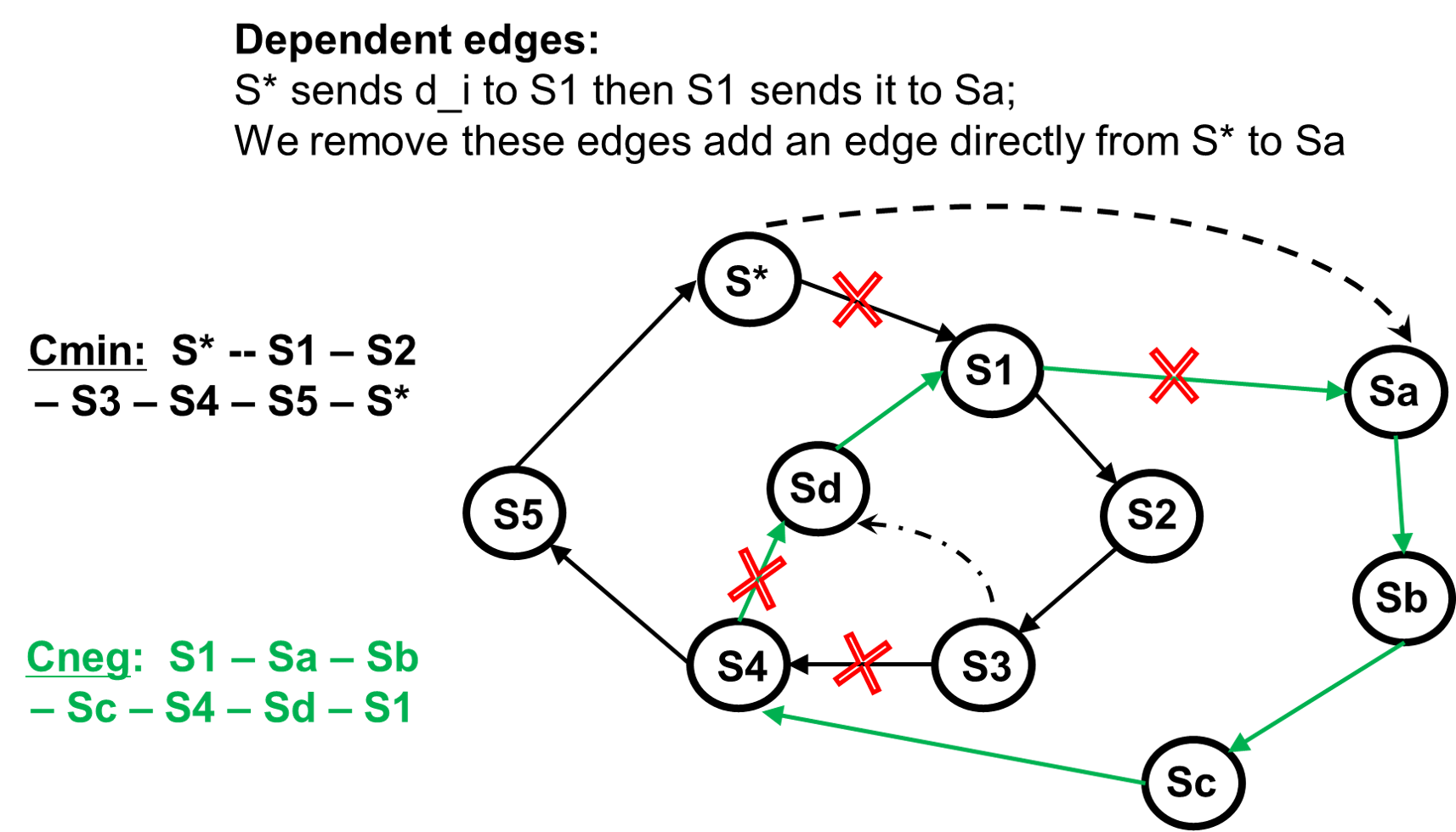}
\caption{Example illustrating $C_{min}$ and $C_{neg}$ cycles for proof of Lemma \ref{mainlemma} (best is color).}
\label{lemfig}
\end{center}
\vspace{-5mm}
\end{figure}

Therefore, since $C_{min}$ was the least cost cycle in  $\Gamma(\mathcal{A}2)$,  the outer cycle must have cost greater than or equal to that of $C_{min}$. Note that the dependency nullification process preserves the total cost, and therefore the total cost of the inner cycle, $S1 \rightarrow S2 \rightarrow S3 \rightarrow Sd \rightarrow S1$, has to be less than or equal to the cost of $C_{neg}$, i.e., it has to be negative. Thus the inner cycle is the $C_{cont}$ we are looking for.
 
\begin{proof}
{\bf Proof details of Lemma \ref{mainlemma}. }  Note that $C_{min}$ contains exactly one  new edge leading out of $s_j$.  All pairs of edges in $C_{min}$ are independent, as are all pairs of edges in $C_{neg}$. Consider the digraph $H(V,E)$, defined by $C_{min} \bigcup C_{neg}$. Since $H$ is a union of two cycles, it satisfies the property that $\forall v \in V, indegree(v) = outdegree(v)$. $H$ will have pairs  of edges, $(e_1 \in C_{min}, e_2 \in C_{neg})$, that are dependent. Let $H'(V,E')$ be the resulting digraph, after we nullify the dependencies in $H$. For any dependency, the vertex, $v_q$, corresponding to service center $s_q$, must belong to both cycles, i.e., $indegree(v_q) = outdegree (v_q) = 2$. After nullifying the dependency, we have $indegree(v_q)$ $= outdegree (v_q) = 1$.  Hence $H'$ also satisfies the property that $\forall v \in V, indegree(v) = outdegree(v)$, i.e. $H'$ admits a decomposition into directed cycles. Note the following:\\   
\noindent{\bf Claim 1:} {\em All the edges in $H'$ belong to $\Gamma(\mathcal{A}2)$.} The only way this property can be violated is when $C_{neg}$ contains an edge that depends on an edge in $C_{min}$. Since all dependencies are nullified, the claim holds. \\ 
{\bf Claim 2:} {\em All the edges in $H'$ belong to $\Gamma(\mathcal{A}1)$, except for the new edge leading out of $s_j$.} The edges in $C_{min}$ are from  $\Gamma(\mathcal{A}1)$, except for the new edge. Any edge in $C_{neg}$ that is not in $\Gamma(\mathcal{A}1)$, is dependent on some edge in $C_{min}$; all such edges are replaced by edges from $\Gamma(\mathcal{A}1)$ when we nullify dependencies. This leaves only one new edge leading out of $s_j$ that is not in $\Gamma(\mathcal{A}1)$.   \\
{\bf Claim 3:} {\em Total cost of all edges in $H'$ is less than the total cost of the edges in $C_{min}$.} The edges in $C_{neg}$ have total cost less than zero, and dependency nullification preserves the cost.\\ 
To complete the argument, find the cycle $C1$ of least cost that includes the new edge leading out of $s_j$. This cycle must have cost greater than or equal to cost of $C_{min}$, since $C_{min}$ was the cycle of least cost in  $\Gamma(\mathcal{A}2)$. Therefore, $H' - C1$ has cost less than zero, i.e., decomposing it into directed cycles yields at least one cycle of negative cost. Since all the edges in $H' - C1$ belong to $\Gamma(\mathcal{A}1)$, this negative cycle existed in $\Gamma(\mathcal{A}1)$.  
\end{proof}

\begin{lemma}
$\Gamma(\mathcal{A}4)$  does not have any negative cost cycles at the end of step 14 in Algorithm \ref{mainalgo}
\label{mainlemma2}
\end{lemma}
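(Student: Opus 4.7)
The plan is to mirror the argument for Lemma \ref{mainlemma} almost verbatim, with the one structural change being that we now handle a negative \emph{path} rather than a negative cycle; this forces the supporting edge-set of our $H$-construction to decompose into directed cycles together with a single residual $s_j$-to-$s_t$ path, rather than into cycles alone. I would argue by contradiction: suppose $\Gamma(\mathcal{A}4)$ contains a negative cycle $C_{neg}$, and let $P_{min}$ be the minimum-cost $s_j^{out}$-to-$s_j^{in}$ walk in $NPath\Gamma(\mathcal{A}3, s_j)$ selected in Step 14 of Algorithm \ref{mainalgo}. Write $P_{min}'$ for the prefix consisting of all but the final penalty-swap edge (from some vertex $s_t$ into $s_j^{in}$, of cost $q_{s_t}() - q_{s_j}()$); then $P_{min}'$ is an $s_j$-to-$s_t$ demand-transfer walk whose edges lie in $\Gamma(\mathcal{A}3)$. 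Since Step 14 applies the refinement only when $\mathrm{cost}(P_{min}) < 0$, I may assume this strict inequality, which in turn excludes the degenerate case $s_t = s_j$ (which would force $P_{min}'$ to be a closed walk of nonnegative cost by Lemma \ref{mainlemma}, making $\mathrm{cost}(P_{min}) \ge 0$).

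Next I would form the edge multiset $H = P_{min}' \cup C_{neg}$ and apply exactly the dependency-nullification operation of Lemma \ref{mainlemma}: whenever $P_{min}'$ contains an edge $(s_p, s_q, d_x)$ that is followed, on the same demand unit $d_x$, by an edge $(s_q, s_r, d_x) \in C_{neg}$---the only way an edge of $C_{neg} \subseteq \Gamma(\mathcal{A}4)$ can fail to already lie in $\Gamma(\mathcal{A}3)$---I replace the pair by the composite edge $(s_p, s_r, d_x)$, whose cost equals the sum and which does lie in $\Gamma(\mathcal{A}3)$ since $d_x$ was at $s_p$ in $\mathcal{A}3$. Let $H'$ be the result. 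The operation preserves total cost and preserves $\mathrm{outdeg}(v) - \mathrm{indeg}(v)$ at every vertex, so in $H'$ that quantity equals $+1$ at $s_j$, $-1$ at $s_t$, and $0$ elsewhere. Hence $H'$ admits a standard flow decomposition into a single directed $s_j$-to-$s_t$ walk $Q$ plus some directed cycles, all of whose edges lie in $\Gamma(\mathcal{A}3)$.

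To close, I invoke Lemma \ref{mainlemma} to deduce that each cycle in the decomposition has nonnegative cost, whence $\mathrm{cost}(Q) \le \mathrm{cost}(H') = \mathrm{cost}(P_{min}') + \mathrm{cost}(C_{neg}) < \mathrm{cost}(P_{min}')$. Translating the edges of $Q$ to the (at most as expensive) corresponding edges of $NPath\Gamma(\mathcal{A}3, s_j)$ and appending the penalty-swap edge from $s_t$ into $s_j^{in}$ produces an $s_j^{out}$-to-$s_j^{in}$ walk in $NPath\Gamma(\mathcal{A}3, s_j)$ of total cost strictly less than $\mathrm{cost}(P_{min}') + (q_{s_t}() - q_{s_j}()) = \mathrm{cost}(P_{min})$, contradicting the optimality of $P_{min}$.

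The step I expect to be the main obstacle is the dependency-nullification and decomposition bookkeeping: because $\Gamma(\mathcal{A})$ is a multigraph and $P_{min}'$ may touch several distinct demand units that each reappear among the edges of $C_{neg}$, I need to verify carefully that every composite edge produced genuinely lies in $\Gamma(\mathcal{A}3)$, that the unique unbalanced source/sink pair $(s_j, s_t)$ really does survive nullification intact so the decomposition is one path plus cycles (rather than, say, a pair of disjoint paths), and that the final translation from $\Gamma(\mathcal{A}3)$-edges to $NPath\Gamma$-edges can only weakly decrease cost, since $NPath\Gamma$ picks the cheapest demand unit per source-target pair by Definition \ref{minpathgamma}.
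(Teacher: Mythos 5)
Your proof is correct and follows essentially the same strategy as the paper's: a contradiction argument that unions the chosen minimum path with the hypothetical negative cycle $C_{neg}$, nullifies demand-unit dependencies cost-preservingly, and decomposes the resulting balanced-degree digraph so that every surviving edge lies in $\Gamma(\mathcal{A}3)$. The only (cosmetic) difference is the direction of the final contradiction: the paper keeps the penalty edge in $H$, peels off the cheapest cycle through it using the minimality of $P_{min}$, and exhibits a negative cycle in $\Gamma(\mathcal{A}3)$, whereas you drop the penalty edge, use the no-negative-cycle invariant of $\Gamma(\mathcal{A}3)$ to bound the cycles, and exhibit a path cheaper than $P_{min}$ --- the same two facts used in reverse order.
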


\begin{proof}
Let $P_{min}$ be the path of least cost in $NPath\Gamma(\mathcal{A}3,s*)$ between the ``in" and ``out" copies ($s*_{in}$ and $s*_{out}$) of $s*$. $P_{min}$ comprises of demand unit transfer edges upto a service center node (say) $s_l$ followed by penalty transfer edge from $s_l$ to the $s*_{in}$. Let $C*$ be a cycle created by collapsing $s*_{in}$ and $s*_{out}$ in $P_{min}$. All the demand unit transfer edges in $C*$ belong to $\Gamma(\mathcal{A}3)$. By way of contradiction, assume that there is a negative cycle, $C_{neg}$ in $\Gamma(\mathcal{A}4)$. Define the digraph $H(V,E)$, as $C* \bigcup C_{neg}$, and let $H'$ be the resulting digraph when we eliminate dependencies from $H$. All the edges in $H'$, other than the penalty transfer edge, belong to $\Gamma(\mathcal{A}3)$. Let $C1$ be the cycle of least cost in $H'$ that contains the penalty transfer edge from $s_l$ to $s*$. $C1$ = a path from $s*$ to $s_l$ + the penalty transfer edge from $s_l$ to $s*$. Similar to the proof of Lemma \ref{mainlemma}, we claim that the digraph  $H' - C1$ contains edges only from $\Gamma(\mathcal{A}3)$, has total cost less than zero, and admits a decomposition into directed cycles.  Therefore it must contain a negative cycle, which also existed in  $\Gamma(\mathcal{A}3)$. $^{\boxed{}}$
\end{proof}



\subsection{Space and Time Complexity Analysis}
\begin{theorem}
 Given an LBDD instance, Algorithm \ref{mainalgo} runs in time $O(nk^3+ nk^2\log n)$ time using $O(nk)$ space, where $n$ is the number of demand nodes and $k$ is the number of service centers.
\label{asralcompl}
\end{theorem}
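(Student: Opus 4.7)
The plan is to bound the cost of a single iteration of the main \textbf{while} loop of Algorithm \ref{mainalgo} and then multiply by $n$, and to show the working data structures never exceed $O(nk)$ cells in total. Since the loop iterates exactly $n$ times (once per demand unit), the bookkeeping amounts to a clean itemization of the work done by (i) the extract-min on the MinDistance heap, (ii) the insertion of $d_i$ into $s_j$ together with the update of $\Gamma DS$, (iii) the Negative Cycle refinement, and (iv) the Negative Path refinement. I would handle each cost separately and argue that they add up to $O(k^{3}+k^{2}\log n)$ per iteration.

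For the time analysis, the extract-min is $O(\log n)$. The data structure $\Gamma DS$ consists of $2\binom{k}{2}=O(k^{2})$ heaps (one \emph{BestTransHeap} per ordered pair of service centers), and at any point the $k-1$ heaps associated with a given $s_i$ together store all demand units currently allotted to $s_i$, so each of these heaps contains $O(n)$ elements. Whenever a single demand unit is moved from $s_p$ to $s_q$, we pay $O(k\log n)$ to update the $k-1$ heaps leaving $s_p$ and the $k-1$ heaps leaving $s_q$. Building $NCyc\Gamma$ or $NPath\Gamma$ from $\Gamma DS$ only needs the top of each \emph{BestTransHeap} and the penalty-difference edges, so it takes $O(k^{2})$ time per iteration. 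Running Bellman–Ford on a graph with $O(k)$ vertices and $O(k^{2})$ edges costs $O(k^{3})$, which dominates construction. The refinement paths/cycles contain $O(k)$ transfer edges, so executing them contributes $O(k\cdot k\log n)=O(k^{2}\log n)$. Summing these, one iteration costs $O(k^{3}+k^{2}\log n)$, and over $n$ iterations we get $O(nk^{3}+nk^{2}\log n)$.

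For the space analysis, I would argue as follows. The MinDistance heap needs $O(n)$ cells. The $\Gamma DS$ dominates: across the $2\binom{k}{2}$ BestTransHeaps, each demand unit $d_i$ currently assigned to some $s_j$ appears in exactly the $k-1$ heaps $BestTransHeap_{j,\ell}$ for $\ell\neq j$, contributing $(k-1)$ entries per demand unit, i.e.\ $O(nk)$ in total. The adjacency representation of $NCyc\Gamma$ or $NPath\Gamma$ is $O(k^{2})$, and the Bellman–Ford distance and parent arrays take $O(k)$. Under the standing assumption $k\le n$ (we have far fewer facilities than people, so the penalty regime is nontrivial), $k^{2}=O(nk)$ and the overall space is $O(nk)$; the demand-service cost matrix $\mathcal{CM}$ is input and not counted in working space.

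The main obstacle I expect is not the Bellman–Ford step itself, which is a direct application of the $O(|V||E|)$ bound, but the careful amortization of the heap maintenance during the refinement steps. Specifically, I need to justify that after a refinement the number of heap updates is proportional to the number of edges on the refinement path (not to the whole graph), and that each such update costs only $O(k\log n)$ rather than, say, $O(k^{2}\log n)$, so that Bellman–Ford remains the dominant $k^{3}$ term. A secondary subtlety is the bookkeeping required to confirm that rebuilding the auxiliary graphs $NCyc\Gamma$ and $NPath\Gamma$ from $\Gamma DS$ really is $O(k^{2})$ per iteration, exploiting the fact that only the $k$ heaps incident to the latest modified service centers change between consecutive iterations. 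Once these two accounting issues are settled, the additive decomposition $O(k^{3}+k^{2}\log n)$ per iteration, and hence the claimed $O(nk^{3}+nk^{2}\log n)$ time with $O(nk)$ space, follows directly.
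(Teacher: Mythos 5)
Your proposal is correct and follows essentially the same route as the paper's own proof: a per-iteration decomposition into extract-min, $\Gamma DS$ updates, $O(k^2)$ construction of the auxiliary graph, $O(k^3)$ Bellman--Ford, and $O(k^2\log n)$ heap maintenance along the $O(k)$-edge refinement path, multiplied by the $n$ loop iterations, with the $O(nk)$ space bound coming from each demand unit appearing in at most $k$ of the BestTransHeaps. The amortization concerns you flag resolve exactly as you anticipate, and your explicit $k\le n$ caveat for absorbing the $O(k^2)$ auxiliary structures is a point the paper leaves implicit.
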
 

Detailed Proof in Appendix \ref{detproofs}.

\section{Experimental Analysis}
\label{exp}
\noindent \textbf{Datasets:} We used the road network of New Delhi, India in our experiments. This road network contained 65,000 nodes and was obtained from OpenStreetMaps [www.openstreetmap.org]. In our experiments, we randomly chose some vertices from the graph and made them service centers. The remaining vertices were designated as demand vertices (each with unit demand). Shortest distance (on the road network) represented the cost assigning a demand unit to a service center. We pre-computed the shortest path distances between all pairs of demand units and service centers.

The number of service centers was varied in the experiments according to ratio $n:k$ (called as service center ratio). Here, $n$ represents the number of demand units and $k$ represents the number of service centers. A ratio of $600:1$ implies that we have one service center for every $600$ demand vertices. For a given number of service centers and demand, the total capacity across all service centers is given by $\theta*n$. This total capacity is distributed amongst $k$ service centers randomly. For sake of easy interpretation of results, instead of generating monotonic functions, we assume that each service center has a penalty value (chosen randomly) inside a range. This range was varied in experiments.  

 Algorithms were implemented in Java 1.8. We conducted our experiments on a Ubuntu machine which had a NVIDIA Tesla P100 GPU (3584 CUDA cores) and Intel Xeon Silver 4110 CPU 2.10GHz processor. Our ASRAL algorithm used around 16GB of RAM in the main memory. Para-ASRAL used an additional 8GB on the GPU memory. \\

\begin{table}[t]
    \centering
    \begin{tabular}{|m{1cm}|m{3cm}|m{3cm}|} 
        \hline 
        \textbf{Ratio} & \textbf{\# Demand units} & \textbf{\# Service center}\\
        \hline 
        500:1   & 65771	& 131 \\ \hline
        600:1	& 65793	& 109 \\ \hline
        700:1	& 65808	& 94 \\ \hline
        800:1	& 65820	& 82 \\ \hline
        900:1	& 65829	& 73 \\ \hline
    \end{tabular}
    \caption{Number of Demand units and Service centers for different service center ratio values.}
    \label{ratio_dis}
    \vspace{-5mm}
\end{table}

\noindent \textbf{Candidate algorithms:} Following $4$ algorithms were compared in experiments. 

\begin{enumerate} 
\item \textbf{Our proposed ASRAL algorithm and Para-ASRAL algorithm.} 

\item \textbf{LoRaL algorithm \cite{loral}} We ran LoRaL by setting its ``exploration parameter'' (value in $k$ in \cite{loral}) to the number of service centers (i.e., $\lvert S \rvert$ ). This means that LoRaL would explore $\lvert S \rvert$  number of potential paths for re-organization (in the current partial solution) and then choose the best. Note that this is the maximum amount of exploration that LoRaL algorithm can perform on any LBDD problem instance. In other words, \textbf{the solution quality of LoRaL reported in this section is the best that can be attained by LoRaL}.

\item \textbf{Min-Cost Bipartite Matching:} Procedure given in \cite{NET20477}.
\end{enumerate}

\noindent \textbf{Variable Parameters:} Following are details of the parameters varied in our experiments.
\begin{enumerate}
    \item  Total capacity parameter value $\theta$ was either 0.3 (low total capacity) or 0.7 (high capacity) in our experiments.
    \item  Penalty values of all the service centers were either chosen from the range [1 to 200] (low penalty values) or from the range [200 400] (high penalty values) in our experiments.
    \item Service center ratio ($n:k$): We choose the following five ratio values 500:1, 600:1, 700:1, 800:1, and 900:1. Table \ref{ratio_dis} describes the number of demand units and service centers in our dataset for different service center ratio values. 
\end{enumerate}
 
We present our results in two parts. First, we compare the running time and solution quality of ASRAL algorithm with alternative approaches (Min-Cost Bipartite Matching and LoRaL) Following this, we present the experimental analysis of para-ASRAL algorithms and compare it with ASRAL and LoRaL algorithm. Note that solution quality of ASRAL and para-ASRAL is the same.

\subsection{Comparing ASRAL with Related Work}
\noindent \textbf{Comparing ASRAL with Min-Cost Bipartite Matching:} Figure \ref{mcexp} illustrates these results.  Our results (Figure \ref{mcexp-a}) indicate that optimal algorithm for \textbf{min-cost bipartite matching does not scale beyond graphs containing 4000 vertices}. Even its main memory requirements grew very fast. It requires greater than 90GB RAM for a LBDD problem instance with 4000 demand vertices and 40 service centers. Due to these reasons, optimal algorithm could not be included in experiments on large problem instances (on road network of size 65000 nodes). 
    
\begin{figure}[ht]   
\vspace{-3mm}   
\begin{center}
\subfigure[Execution Time]{\label{mcexp-a}\includegraphics[width=55mm]{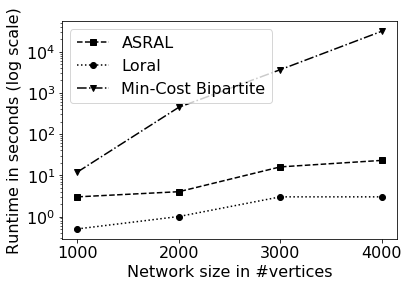}}
\subfigure[Objective Function Value]{\label{mcexp-b}\includegraphics[width=55mm]{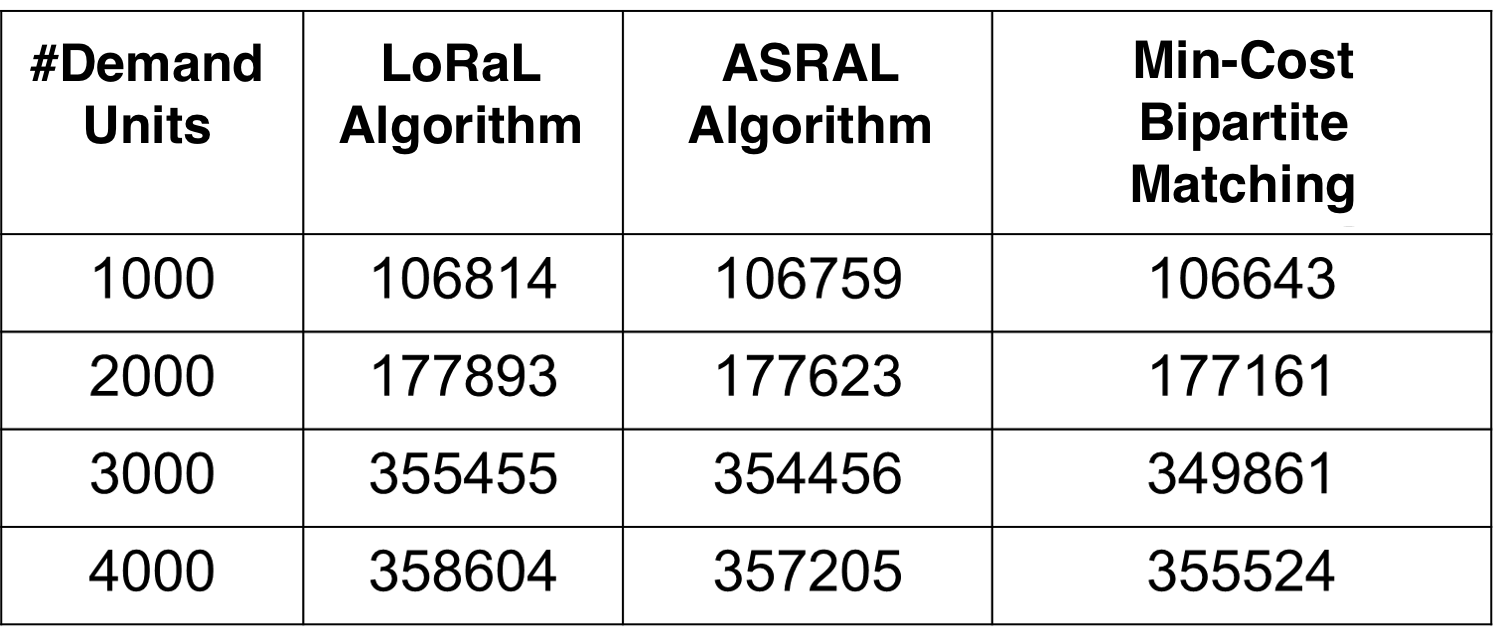}}
\vspace{-2mm}
\caption{ASRAL vs Min-Cost Bipartite Matching algorithm. $n:k$ was $300:1$}
\label{mcexp} 
\end{center}  
\vspace{-5mm} 
\end{figure}

\noindent \textbf{Comparing ASRAL with LoRaL:}
Figure \ref{Sol-asral} and Figure \ref{RT-asral} illustrates these results. \textbf{ASRAL obtains significantly lower values of objective function}. As the Figure \ref{Sol-asral} shows, our proposed ASRAL obtains a significantly better solution whose objective function value in some cases (see Figure \ref{S14}) is lower by \emph{half a million (or more) units}. In general, the difference is higher in two cases: (a) service centers have higher penalty values (between 200 -- 400) and/or; (b) number service centers is small (refer data for |D|:|S|=800:1 and 900:1 in Figure \ref{Sol-asral}). ASRAL does have more execution time than LoRaL (Figure \ref{RT-asral}), however, the key thing to note is that this increased execution time leads to significantly lower objective function values. ASRAL was able to scale up to large problem instances which the optimal algorithm for min-cost bi-partite matching was unable to do so.

\begin{figure}[ht]
     \centering
     \subfigure[Capacity = 0.3 and Penalty range from 1 to 200]{\label{S11}\includegraphics[width=0.48\textwidth,height=4cm]{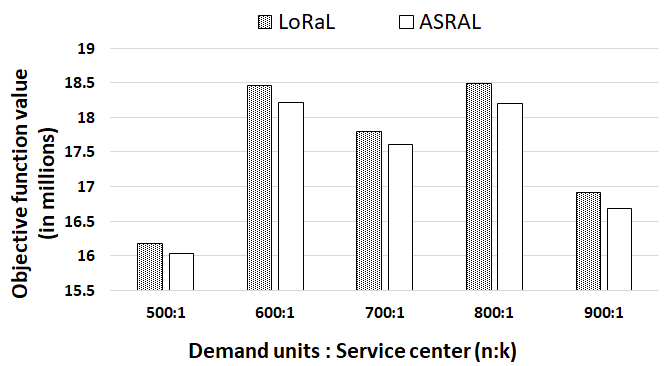}}
     \subfigure[Capacity = 0.3 and Penalty range from 200 to 400]{\label{S12}\includegraphics[width=0.48\textwidth,height=4cm]{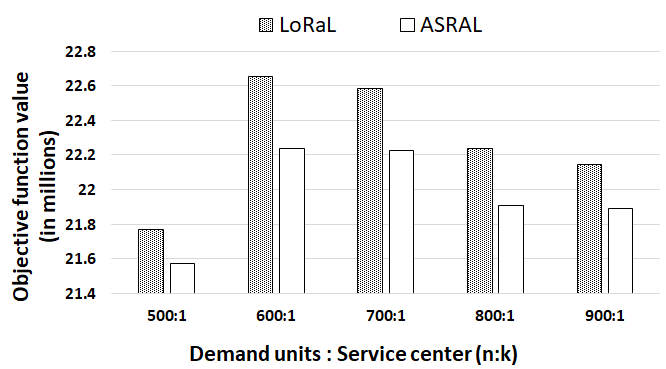}}
     \subfigure[Capacity = 0.7 and Penalty range from 1 to 200]{\label{S13}\includegraphics[width=0.48\textwidth,height=4cm]{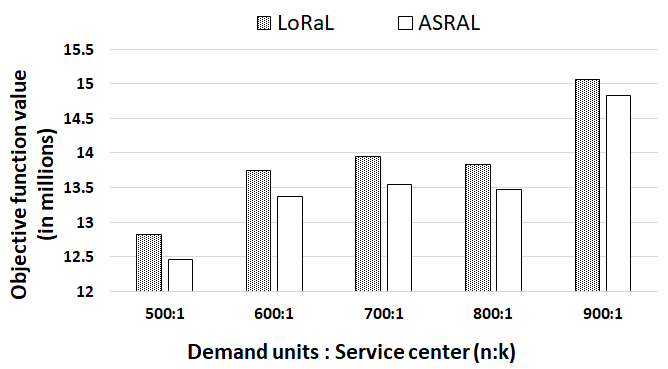}}
     \subfigure[Capacity = 0.7 and Penalty range from 200 to 400]{\label{S14}\includegraphics[width=0.48\textwidth,height=4cm]{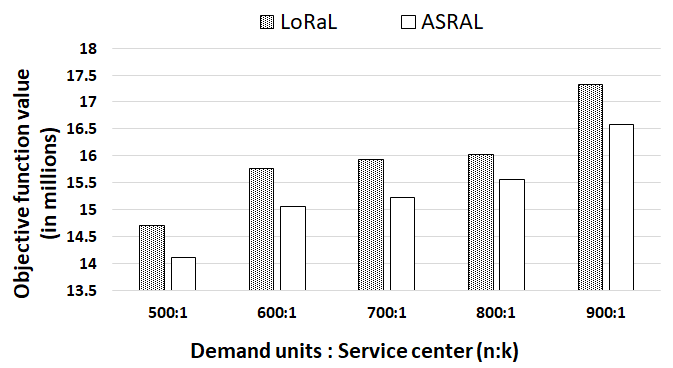}}
\caption{Final Objective function value of LoRaL and ASRAL algorithm}
\vskip 0pt
\label{Sol-asral}
\vspace{-2mm}
\end{figure}

\begin{figure}[ht]
     \centering
     \subfigure[Capacity = 0.3 and Penalty range from 1 to 200]{\label{R11}\includegraphics[width=0.48\textwidth,height=4cm]{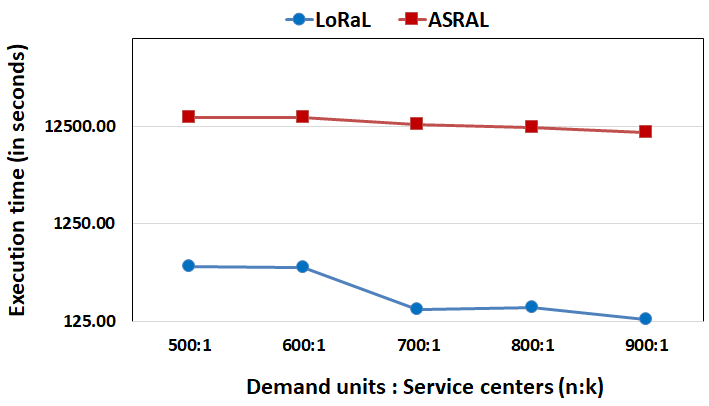}}
     \subfigure[Capacity = 0.3 and Penalty range from 200 to 400]{\label{R12}\includegraphics[width=0.48\textwidth,height=4cm]{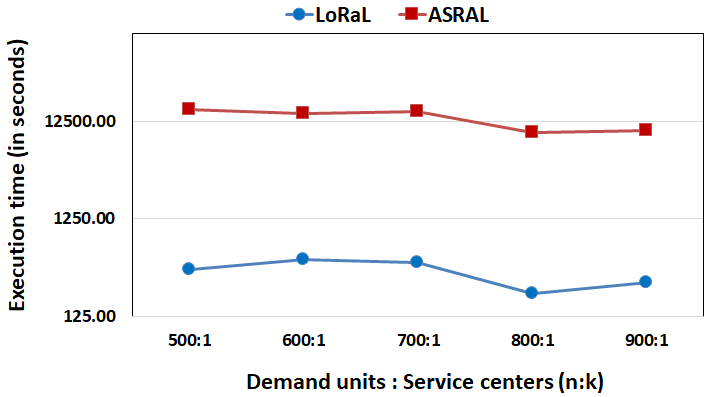}}
     \subfigure[Capacity = 0.7 and Penalty range from 1 to 200]{\label{R13}\includegraphics[width=0.48\textwidth,height=4cm]{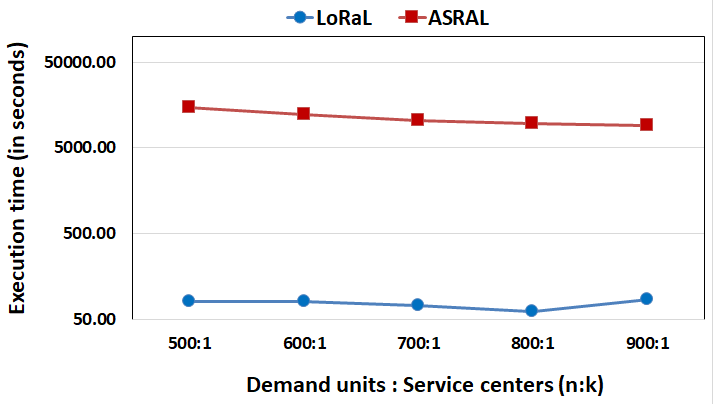}}
     \subfigure[Capacity = 0.7 and Penalty range from 200 to 400]{\label{R14}\includegraphics[width=0.48\textwidth,height=4cm]{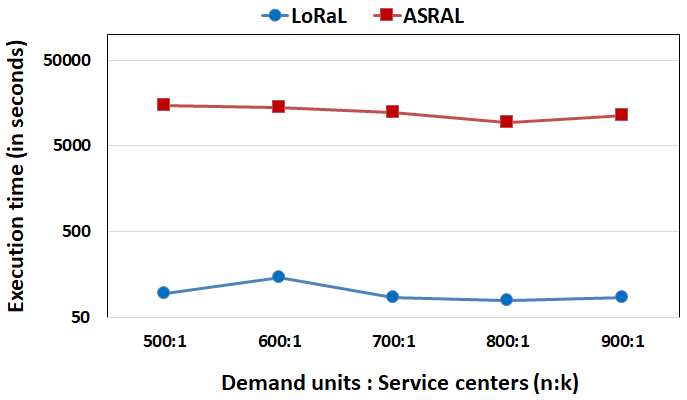}}
\caption{Execution time comparison of LoRaL and ASRAL algorithm (in log10 scale)}
\vskip 0pt
\label{RT-asral}
\vspace{-2mm}
\end{figure}

\noindent \textbf{Analyzing ASRAL running-time:} In Figure \ref{avgtime}, we show a typical break-up of execution time of ASRAL algorithm.  The figure shows that 74.15\% of the execution time was spent in updating $\Gamma DS$. Bellman ford algorithm took and 6.44\% of the execution time. And all other operations take up a total of 19.41\% time. The above observations motivate us to implement the parallel version of the ASRAL algorithm by using threads for updating $\Gamma DS$ and GPU cores for the bellman ford algorithm. These ideas were discussed in Section \ref{negpath} and Section \ref{cascade} respectively.
 
\subsection{Comparing ASRAL algorithm and Para-ASRAL algorithm} 
\textbf{Effecting of parallelizing Bellman Ford algorithm:} Figure \ref{beltime} compare the total time spent in the Bellman ford algorithm (used inside negative cycle and negative path refinement procedures) in ASRAL and Para-ASRAL algorithms. As mentioned earlier, Para-ASRAL uses GPU codes to parallelize the Bellman ford algorithm. Our results show that, we are able to reduce much of the running time of the Bellman ford algorithm using GPU cores. We also observe that reduction in running time is higher at lower service center ratios (e.g., values at 500:1 and 600:1 in Figure \ref{beltime}). This is due to the fact that the size of the allotment subspace mulitgraph decreases as we increase the service center ratio (in our dataset). For example, at 900:1  service center ratio, we would be having just 73 service centers (in our dataset). Whereas, we had 131 service centers at 500:1 ratio. 

\textbf{Effecting of parallelizing $\Gamma DS$ update:} Figure \ref{castime} illustrates the total time spent in updating the $\Gamma DS$ in the ASRAL and para-ASRAL algorithms. As mentioned earlier, para-ASRAL uses CPU cores to perform updates to $\Gamma DS$ in parallel. Our results indicate that we were to able to reduce the $\Gamma DS$ updation time by at-least 85\% across different values of total capacity, penalty and service center ratios.  

\textbf{Total speed-up achieved by Para-ASRAL algorithm:} In Figure \ref{tottime}, we compare the total running time of ASRAL, para-ASRAL and LoRaL algorithms. Our experimental results indicate that Para-ASRAL has at-least 78\% less running time than ASRAL algorithm across different values of total capacity, penalty and service center ratios. Running time of Para-ASRAL is much closer to LoRaL algorithm. 

\begin{figure}[H]
\centering
\includegraphics[width=80mm]{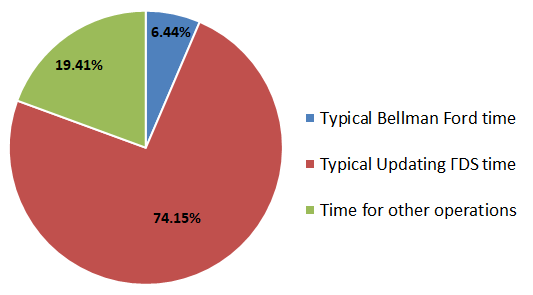}
\caption{Breakup of execution time of the ASRAL algorithm}
\label{avgtime}
\vspace{-2mm}
\end{figure}

\newpage 

\begin{figure}[H]
     \centering
     \subfigure[Capacity = 0.3 and Penalty range (1 200) ]{\label{bf11}\includegraphics[width=0.48\textwidth,height=4cm]{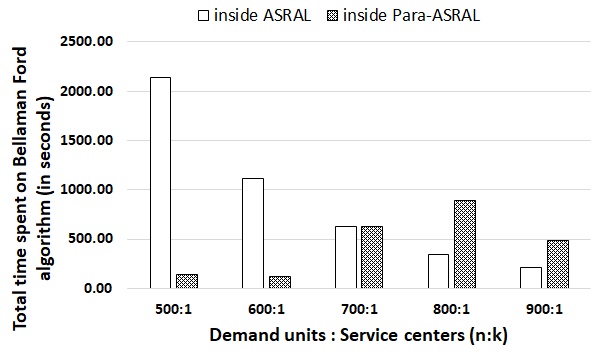}}
     \subfigure[Capacity = 0.3 and Penalty range (200  400)]{\label{bf12}\includegraphics[width=0.48\textwidth,height=4cm]{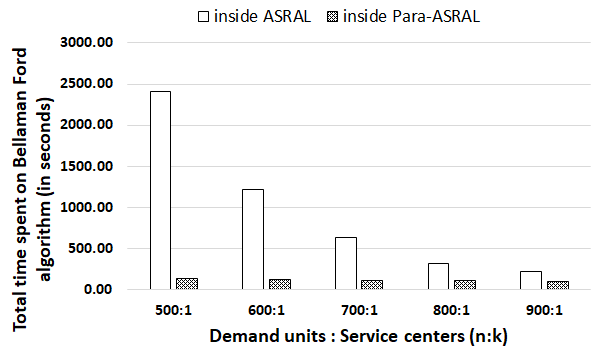}}
     \subfigure[Capacity = 0.7 and Penalty range (1  200)]{\label{bf13}\includegraphics[width=0.48\textwidth,height=4cm]{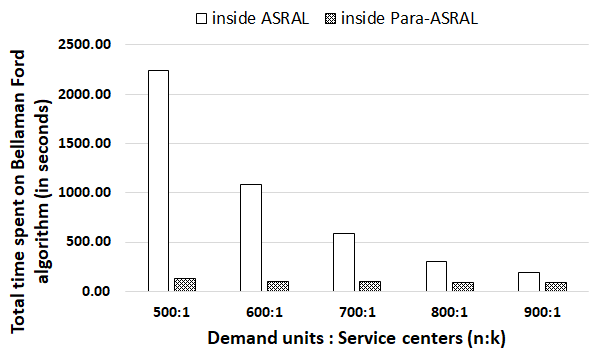}}
     \subfigure[Capacity = 0.7 and Penalty range (200  400)]{\label{bf14}\includegraphics[width=0.48\textwidth,height=4cm]{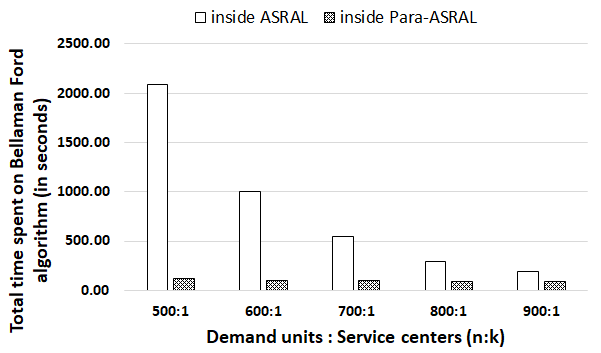}}
\caption{Total time spent on Bellman Ford algorithm inside ASRAL and para-ASRAL algorithm.}
\label{beltime}
\end{figure}

\newpage 

\begin{figure}[H]
     \centering
     \subfigure[Capacity = 0.3 and Penalty range (1 200)]{\label{f11}\includegraphics[width=0.48\textwidth,height=4cm]{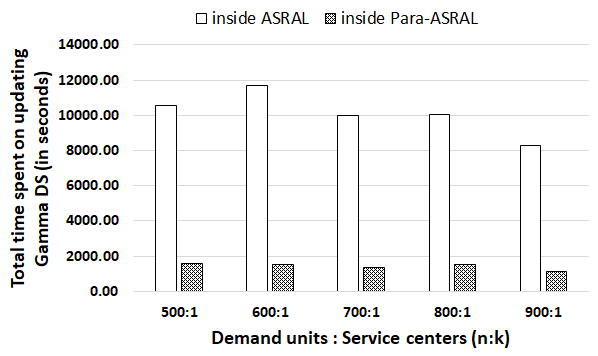}}
     \subfigure[Capacity = 0.3 and Penalty range (200  400)]{\label{f12}\includegraphics[width=0.48\textwidth,height=4cm]{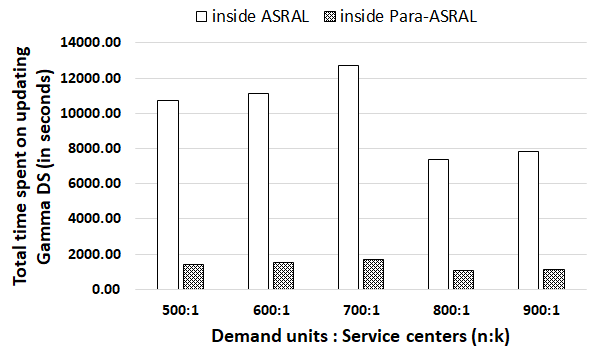}}
     \subfigure[Capacity = 0.7 and Penalty range (1  200)]{\label{f13}\includegraphics[width=0.48\textwidth,height=4cm]{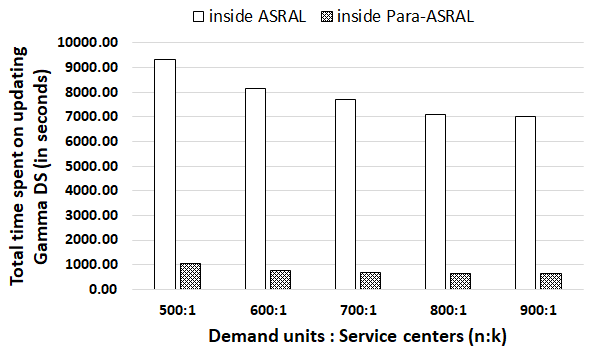}}
     \subfigure[Capacity = 0.7 and Penalty range (200  400)]{\label{f14}\includegraphics[width=0.48\textwidth,height=4cm]{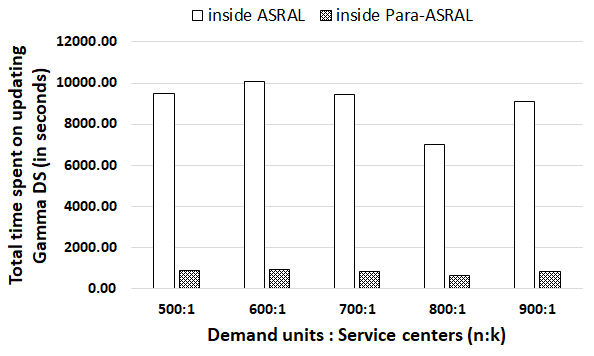}}
\caption{Total time spent on updating $\Gamma D S$ inside ASRAL and para-ASRAL algorithm}
\label{castime}
\end{figure}

\newpage 

\begin{figure}[H]
     \centering
     \subfigure[Capacity = 0.3 and Penalty range (1   200)]{\label{tt11}\includegraphics[width=0.48\textwidth,height=4cm]{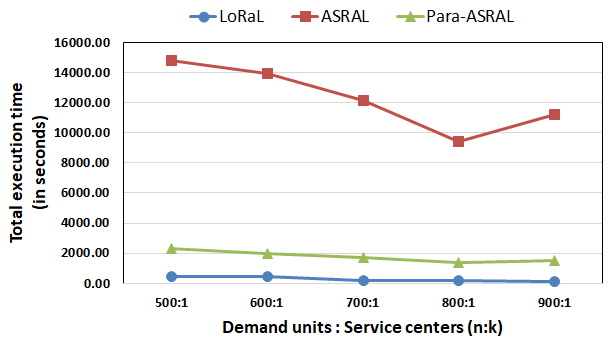}}
     \subfigure[Capacity = 0.3 and Penalty range (200   400)]{\label{tt12}\includegraphics[width=0.48\textwidth,height=4cm]{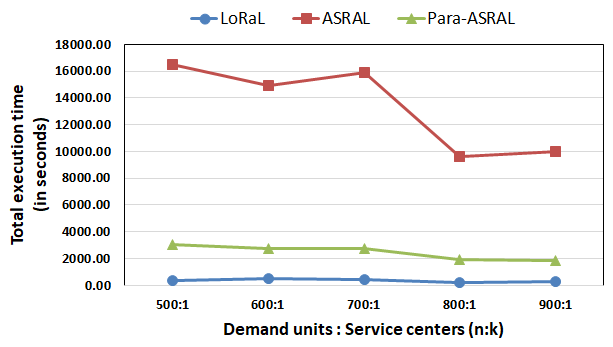}}
     \subfigure[Capacity = 0.7 and Penalty range  (1   200)]{\label{tt13}\includegraphics[width=0.48\textwidth,height=4cm]{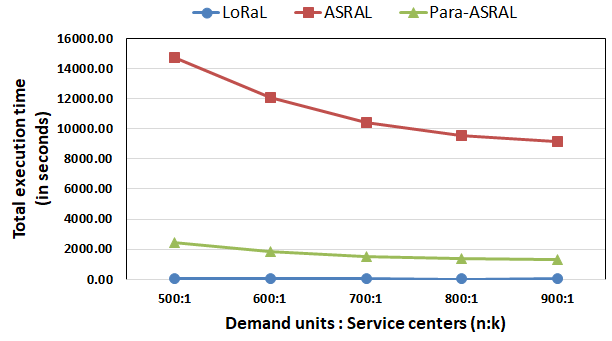}}
     \subfigure[Capacity = 0.7 and Penalty range (200   400)]{\label{tt14}\includegraphics[width=0.48\textwidth,height=4cm]{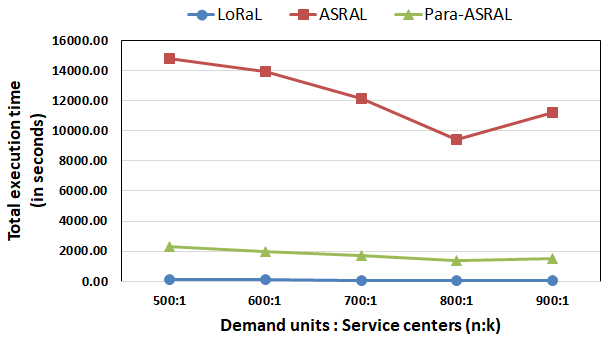}}
\caption{Total execution time of LoRaL, ASRAL and para-ASRAL algorithms}
\label{tottime}
\end{figure}

\newpage
   
\section{Conclusion}
\label{con}
Load Balanced Demand Distribution (LBDD) is a societally important problem. LBDD problem can be reduced to min-cost bipartite matching problem. However, optimal algorithms for min-cost bipartite matching cannot scale to large graphs. Our proposed ASRAL algorithm is able to scale-up while maintaining better solution quality over alternative approaches. We also develop a parallel implementation (Para-ASRAL) of the proposed ASRAL algorithm to improve its scalability. Para-ASRAL is significantly faster than ASRAL while maintaining the same solution quality.\\

\noindent \textbf{Acknowledgement:}  To be added later as the submission was double blind.\\

\noindent \textbf{Author Contributions:} To be added later as the submission was double blind.\\

\bibliographystyle{ACM-Reference-Format}
\bibliography{main}

\begin{appendices}

\section{Negative Path Refinement}
\label{negpalgo}
\begin{algorithm}[H]
\caption{Negative Path Refinement}   
\label{algrnegp}
\begin{flushleft}
\noindent \textbf{Input:} (a) Current allotment $\mathcal{A}$; (b) Anchor service center $SC_\theta$ \\
\noindent \textbf{Output} Allotment $\mathcal{F}$ after adjustment. Removes a negative path originating from $Sc_\theta$ in $\Gamma(\mathcal{A})$ of the input $\mathcal{A}$.
\end{flushleft}
\begin{algorithmic}[1]
\STATE Create NegPath allotment subspace graph $NPath\Gamma(\mathcal{A},SC_\theta)$ with $SC_\theta$ as the anchor vertex.
\STATE Let $w_{out}$ and $w_{in}$ be ``IN'' and ``OUT'' copies of the anchor vertex $SC_\theta$ in $minPath\Gamma(\mathcal{A})$.
\STATE Find lowest cost path $P_{min}$ in $NPath\Gamma(\mathcal{A})$ from $w_{out}$ to $w_{in}$
\IF{Cost of $P_{min} < 0$}
\FOR {each demand node transfer $<s_p,dn,s_q>$ $\in P_{min}$}
\STATE $\mathcal{F} \leftarrow$ Change $\mathcal{A}$ by removing $dn$ from $s_p$ and adding it to $s_q$. \State Update $\Gamma DS$
\STATE $\mathcal{A} \leftarrow \mathcal{F}$
\STATE $\Delta \leftarrow$ $\Delta~+$  Cost of corresponding edge in $P_{min}$ 
\ENDFOR
\STATE $\Delta \leftarrow$ $\Delta +$ Cost of last edge in $P_{min}$  /*This edge represents difference in penalties. 
\ENDIF
\STATE Return new allotment $\mathcal{B}$
\end{algorithmic}  
\end{algorithm}



\section{Optimal solution for LBDD under strict capacity constraints}
\label{optlbdd}

Our previous proposed algorithm for solving LBDD can be trivially adapted to give the \emph{optimal solution} when service centers are not allowed to be overloaded. We call this variation as \emph{strict-LBDD} problem. Following is formal definition of the problem.

\noindent \textbf{Strict-LBDD Problem}\\
 \textbf{Inputs}:\\			
			 (1) A set $S$ of $k$ service centers. \\
			 (2) A set $D$ of $n$ demand units. \\
			 (3) A demand-service cost matrix $\mathcal{CM}$. $\mathcal{CM}(i, j)$  contains  cost (positive real value) of allotting  demand unit $d_i $ to  service center $s_j$ ($\forall d_i \in D,~\forall s_j \in S$).\\ 
			 (4) Capacity $(c_{s_i})$ (integer) of each service center $s_i \in S$.  \\
 \textbf{Output}: A complete allotment which minimizes the following objective function. 
			\begin{multline}
			 \textit{\textbf{Minimize}}
			 \Bigg\{
			 \sum_{\substack{s_i \in Service \\Centers}} 	
				\Bigg\{
				\sum_{\substack{d_j \in Demand  \ unit\\ allotted \ to \  s_i}} 
					\mathcal{CM}(d_j,s_i)
					\Bigg\} \Bigg\}
			\label{eq3}   
			\end{multline}
  
\noindent \textbf{Constraints:} (i) A demand unit is allotted to only one service center; (ii) No service center can be allotted demand beyond its capacity. 

As can be imagined, if we have {\bf excess demand}, i.e., $n > \Sigma_{i=1}^{i=k} (c_{s_i})$, then it is not possible to generate a complete allotment. In such a case, we modify the problem instance as follows: 
\noindent (1) We add a new service center $s_{k + 1}$ of capacity $n - \Sigma_{i=1}^{i=k} (c_{s_i})$. 

\noindent (2) Let $max$ be the largest value in the given $\mathcal{CM}$. We modify $\mathcal{CM}$ by adding a  column for service center $s_{k+1}$ and set all the values in this column to $max+1$.

Note that this modification increases the value of the objective function by an amount equal to $(n-\Sigma_{i=1}^{i=k} (c_{s_i}))\times (max +1)$. Since this increase applies to all possible complete allotments, it does not affect the optimal solution.

\subsection{Characterizing the Optimal Solution in terms of $\Gamma()$}  
\label{olbdd}
For this purposes of characterizing the optimal solution of \emph{strict-LBDD}, we assume that all service centers are packed to capacity at all times using certain dummy demand units as any stage during the assignment process. These dummy demand units have zero assignment cost for all service centers. As a result, if $s_i$ has $p$ units of surplus capacity (for accommodating ``real'' demand units given in the input), then $\Gamma(\mathcal{A})$ would have $p$ edges of zero cost from $s_i$ to any $s_j$ due to the transfers of each of the $p$ dummy demand units.

As mentioned earlier, cycles of negative cost in $\Gamma(\mathcal{A})$ can help us improve the solution quality (i.e., lower the objective function values). The following theorem (Theorem \ref{optimalityThm}) formalizes the relationship between optimality of an \emph{strict-LBDD} solution and the presence of negative cycles. Similar results can be found in the literature (see for instance, Theorem 7.8 in \cite{graphbook}) but we provide the details for completeness. In our proofs, we employ the following proposition, which is common knowledge in Graph Theory.

\begin{proposition}
A directed graph, $G(V,E)$, admits a decomposition
into directed cycles if and only if, $\forall v \in V$, $indegree(v) = outdegree(v)$.  
\label{digraphdecomp}
\end{proposition}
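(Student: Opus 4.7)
The plan is to prove Proposition \ref{digraphdecomp} in two directions, treating the ``decomposition into directed cycles'' as a partition of $E$ into edge-disjoint directed cycles.

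For the forward direction, I would assume $E$ decomposes into cycles $C_1, C_2, \ldots, C_t$. Each cycle $C_j$ either avoids a given vertex $v$ or passes through it some number of times, and each pass contributes exactly one edge to $indegree(v)$ and exactly one edge to $outdegree(v)$. Summing over all cycles in the decomposition (which is a partition of $E$), the total in-degree at $v$ equals the total out-degree at $v$. This direction is essentially bookkeeping.

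For the reverse direction, I would argue by induction on $|E|$. The base case $|E|=0$ is trivial (the empty decomposition). For the inductive step, suppose every vertex satisfies $indegree(v) = outdegree(v)$ and $|E| > 0$. The main obstacle here is establishing that a directed cycle must exist; once we have that, stripping it off preserves the in-degree/out-degree equality at every vertex (since each vertex on the extracted cycle loses exactly one in-edge and one out-edge), and induction finishes the job. To locate a cycle, I would pick any vertex $u$ with $outdegree(u) \geq 1$ and start walking along out-edges, producing a sequence $u = v_0, v_1, v_2, \ldots$ of vertices. The crucial observation is that whenever the walk enters a vertex $v_i$ with $i \geq 1$, it has consumed one more in-edge than out-edge at $v_i$ relative to fresh visits, so by the degree balance there is always an unused out-edge to continue (unless $v_i = u$, in which case we already have a closed walk). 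Since $V$ is finite, some vertex eventually repeats, giving a closed directed walk, from which a simple directed cycle can be extracted by taking the sub-walk between the first and second occurrences of the repeated vertex.

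The hard part is making the ``walk extension'' argument precise, because one must be careful that the walk never gets stuck at a dead end before revisiting a vertex. I would formalize this by maintaining, at each step, the invariant that the current endpoint $v_i$ of the walk has $indegree_{\text{used}}(v_i) - outdegree_{\text{used}}(v_i) = 1$ when $v_i \neq u$, where ``used'' counts edges already consumed by the walk; combined with the global balance $indegree(v_i) = outdegree(v_i)$, this guarantees an unused outgoing edge at $v_i$. Thus the walk can only terminate by returning to $u$ or by revisiting some previously visited vertex, both of which yield a directed cycle. Removing the edges of that cycle from $G$ preserves the balance condition, and applying the inductive hypothesis to the residual graph completes the decomposition.
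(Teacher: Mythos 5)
Your proof is correct. Note, however, that the paper does not actually prove Proposition~\ref{digraphdecomp}: it is invoked as ``common knowledge in Graph Theory'' and used as a black box in Theorem~\ref{optimalityThm} and the lemmas, so there is no in-paper argument to compare against. Your write-up is the standard proof: the forward direction is pure bookkeeping, and the reverse direction extracts one cycle, removes it (which subtracts exactly one from the in-degree and one from the out-degree of each vertex on it), and inducts on $|E|$. The one place you flag as delicate---the walk never stalling---is handled correctly by your invariant that the current non-start endpoint has used in-degree exceeding used out-degree by one; a minor simplification is that if you only need to \emph{locate} a cycle (rather than a closed trail), the invariant is unnecessary: any vertex the walk enters has $indegree \geq 1$, hence $outdegree \geq 1$ by the balance condition, so an arbitrary walk continues until a vertex repeats by finiteness of $V$, and the segment between the first and second occurrences of the first repeated vertex is already a simple directed cycle. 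Either way the induction closes, and the statement is established in the edge-disjoint-partition sense in which the paper uses it.
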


\begin{theorem}
\label{optimalityThm}
Let $\mathcal{A}$ be an allotment for an instance of the \emph{strict-LBDD} problem, and $\Gamma(\mathcal{A})$ be the corresponding allotment subspace multigraph. $\mathcal{A}$ is an optimal solution if and only if $\Gamma(\mathcal{A})$ does not have any simple negative cost cycles.
\end{theorem}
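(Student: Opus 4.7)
The plan is to prove the biconditional in two directions, leveraging the dummy-demand-unit convention (so that every service center is always packed to capacity) to make transfers along cycles always feasible. The cost of any simple cycle in $\Gamma(\mathcal{A})$ equals exactly the change in the objective function obtained by simultaneously performing all the transfers it prescribes, since each edge contributes $\mathcal{CM}(dn,s_j)-\mathcal{CM}(dn,s_i)$ and the in/out structure of a cycle guarantees that every service center's total load is preserved. This observation is the linchpin of both directions.

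For the forward direction (optimal $\Rightarrow$ no simple negative cycle), I would argue by contradiction. Suppose $\mathcal{A}$ is optimal and $\Gamma(\mathcal{A})$ contains a simple cycle $C$ of negative cost. Applying the transfers along $C$ yields a new allotment $\mathcal{A}'$ whose objective value equals $\text{cost}(\mathcal{A}) + \text{cost}(C) < \text{cost}(\mathcal{A})$; moreover, feasibility is preserved because every service center loses exactly one demand unit (via the outgoing edge) and gains exactly one (via the incoming edge), so capacities are untouched. This contradicts optimality.

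For the reverse direction (no simple negative cycle $\Rightarrow$ optimal), let $\mathcal{A}^*$ be any optimal allotment and consider the multiset of edges $E^*$ defined as follows: for every demand unit $d_\ell$ assigned to $s_p$ in $\mathcal{A}$ and to $s_q \neq s_p$ in $\mathcal{A}^*$, include the edge $(s_p, s_q, d_\ell)$, which is an edge of $\Gamma(\mathcal{A})$. Because both $\mathcal{A}$ and $\mathcal{A}^*$ fill every service center to capacity (using the dummy-unit convention), each service center $s_i$ has the same total load under both allotments; therefore the number of demand units leaving $s_i$ (according to $E^*$) equals the number arriving, i.e., $\text{indegree}(s_i)=\text{outdegree}(s_i)$ in the digraph $(V, E^*)$. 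By Proposition \ref{digraphdecomp}, $E^*$ decomposes into directed cycles $C_1,\dots,C_r$. The sum of costs of these cycles telescopes precisely to $\text{cost}(\mathcal{A}^*)-\text{cost}(\mathcal{A})$, since each demand unit that is moved contributes exactly the difference in its assignment costs between the two allotments.

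If $\mathcal{A}$ were not optimal, then $\text{cost}(\mathcal{A}^*)-\text{cost}(\mathcal{A}) < 0$, so at least one $C_j$ has negative total cost. The final small step is to extract a \emph{simple} negative cycle from $C_j$: if $C_j$ is not simple, repeatedly split it at a repeated vertex into two smaller closed walks; the costs are additive under such a split, so at least one resulting piece is still negative, and iterating this process terminates in a simple negative cycle lying in $\Gamma(\mathcal{A})$, contradicting the hypothesis. The main obstacle I anticipate is articulating cleanly why $E^*$ is a sub-multigraph of $\Gamma(\mathcal{A})$ (edges with $s_p = s_q$ are simply omitted), why the in/out-degree balance holds in the presence of dummy units, and why the cycle-cost telescoping is exact; once those bookkeeping points are pinned down, the rest is essentially mechanical.
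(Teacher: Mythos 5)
Your proposal is correct and follows essentially the same route as the paper: both directions are argued identically, and your multiset $E^*$ is exactly the paper's Difference Multigraph $DM(\mathcal{A},\mathcal{A}^{opt})$, with the same degree-balancing via zero-cost dummy-unit transfer edges, the same appeal to Proposition~\ref{digraphdecomp} for the cycle decomposition, and the same telescoping-cost argument. The only bookkeeping point you flag --- matching up dummy units between the two allotments --- is resolved exactly as the paper does it, by noting that the surplus outgoing and incoming dummy arcs balance globally and all have cost zero, so they may be paired arbitrarily.
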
  
 
\begin{proof}
We assume in all our arguments that the cycles are simple. It is straightforward to show that if an allotment contains a non-simple negative cycle, it also contains a simple negative cycle.
 
Since the removal of a negative cost cycle reduces the total cost, it is obvious that an absence of negative cost cycles is necessary for optimality. We shall now prove that an absence of negative cost cycle is sufficient for optimality. 

Let $\mathcal{A}^{opt}$ be an optimal allotment, and  $\mathcal{A}$ be any allotment such that $\Gamma(\mathcal{A})$ does not contain  any negative cost cycles. We shall show that if $Cost(\mathcal{A}^{opt})$ is less than $Cost(\mathcal{A})$, then $\Gamma(\mathcal{A})$ must contain a negative cost cycle. 

We construct the {\em Difference Multigraph} for $\mathcal{A}$ and  $\mathcal{A}^{opt}$, denoted as  $DM(\mathcal{A}^{opt},  \mathcal{A})$ in the following way: \\
(1) For each service center $s_i, 1 \leq i \leq k$ (or $k+1$ in case of excess demand), add a vertex $v_i$ to $DM(\mathcal{A},  \mathcal{A}^{opt})$.\\
(2) Let $d_i, 1 \leq i \leq n,$ be any demand unit such that $\mathcal{A}$ assigns $d_i$ to service center $s_p$, and $\mathcal{A}^{opt}$ assigns $d_i$ to service center $s_q$, and $p \neq q$. $DM(\mathcal{A},  \mathcal{A}^{opt})$ contains an edge, $(v_p, v_q, d_i)$
of weight $\mathcal{CM}(d_i, s_q) - \mathcal{CM}(d_i, s_p)$. 
 
Note that $DM(\mathcal{A},  \mathcal{A}^{opt})$ represents the set of all demand unit movements needed to transform allotment $\mathcal{A}$ into  $\mathcal{A}^{opt}$. Therefore all these edges must belong to $\Gamma(\mathcal{A})$. 

{\bf Claim.} {\em We can ensure that the out-degree of each vertex $v_i$ in   $DM(\mathcal{A},  \mathcal{A}^{opt})$ is the same as in-degree of $v_i$ by appropriately adding edges that transfer dummy demand units.}\\ 
{\bf Proof of Claim.} In the construction of $DM(\mathcal{A},  \mathcal{A}^{opt})$, if the number of demand units moved out of $s_i$ is the same as the number of demand units moved in to $s_i$, we have nothing to prove. Let $\delta _i$ denote the difference between the number of demand units moved in to $s_i$ and  the number of demand units moved out of $s_i$. 
If $\delta _i > 0$, we will add $\delta _i$ outgoing edges of cost zero to $v_i$, for moving $\delta _i$ dummy demand units out of $s_i$. Note that since both $\mathcal{A}$ and  $\mathcal{A}^{opt}$ are valid allotments, these dummy demand units are guaranteed to exist.
Correspondingly, if $\delta _i < 0$, we can add $\delta _i$ incoming edges of cost zero to $v_i$. Since both $\mathcal{A}$ and  $\mathcal{A}^{opt}$ have the same number of demand units,  $\Sigma (\delta _i) = 0$. This means that total number of outgoing  dummy unit transfer arcs is equal to the total number of incoming  dummy unit transfer arcs. Since all these are of cost zero, it follows that we can arbitrarily connect the vertices with edges of weight zero to ensure that the claim is satisfied. {\bf end of proof of Claim}

By Proposition \ref{digraphdecomp} therefore, $DM(\mathcal{A},  \mathcal{A}^{opt})$ can   be partitioned into a set of edge-disjoint directed cycles.  Since $\mathcal{A}$ has greater cost than $\mathcal{A}^{opt}$, the sum of all the edge weights of $DM(\mathcal{A},  \mathcal{A}^{opt})$ must be less than zero, and therefore at least one of these cycles must have cost less than zero. All these edges must belong to   $\Gamma(\mathcal{A})$, i.e., $\Gamma(\mathcal{A})$ has a negative cycle.
\end{proof}

\subsection{Optimal algorithm for strict-LBDD problem}
Algorithm \ref{stlbddalgo} presents an optimal algorithm for the strict-LBDD problem. Optimiality of Algorithm \ref{stlbddalgo} can be trivially established using a combination of Lemma \ref{mainlemma} and Theorem \ref{optimalityThm}.

\begin{algorithm}[ht]
\caption{Subspace Re-adjustment based Approach for Strict-LBDD}
\label{stlbddalgo}
\begin{flushleft}
\noindent \textbf{Input:} (a) Demand-service cost matrix $\mathcal{CM}$; (b) A set $S$ of $k$ service centers. For each $s_i \in S$, we have its capacity $c_{s_i}$; (c) A set $D$ of $n$ demand units. \\
\noindent \textbf{Output:} An optimal allotment for the problem instance
\end{flushleft}
\begin{algorithmic}[1]
\STATE If total capacity is less than total demand then add dummy service center $s_{k+1}$ and modify $\mathcal{CM}$ by adding a column for $s_{k+1}$ and set all values to (max assignment cost in $\mathcal{CM}$ +1) 
\STATE Initialize $\mathcal{A}1$. Create dummy demand units and assign them to all service centers $s_i \in S$ at zero cost such that each service center is full in terms of capacity. 
\WHILE {there exists a visited demand unit}   
\STATE Pick an unvisited demand unit $d_i$. Mark $d_i$ as visited.
\STATE $X \leftarrow$ Set of all service centers that have residual capacity (i.e., they still some dummy demand nodes assigned to them).
\STATE Determine a service center $s_j$ from $X$ such that $\mathcal{CM}(d_i, s_j)$ is minimum.
\STATE $\mathcal{A}2 \leftarrow$ Change $\mathcal{A}1$ by replacing a dummy demand unit by $d_i$ in $s_j$ 
\STATE $\mathcal{A}3 \leftarrow$ Negative Cycle refinement($\mathcal{A}2$,$s_j$) (using Algo \ref{algrnegc})
\STATE $\mathcal{A}1 \leftarrow \mathcal{A}3$
\ENDWHILE
\STATE Return the resulting allotment after undoing the pre-processing step (i.e., remove any dummy demand units or remove the dummy service center $s_{k+1}$ (and its assigned demand units)).
\end{algorithmic}
\end{algorithm}

\section{Other Detailed Proofs}
\label{detproofs} 

\textbf{Proof of Theorem \ref{asralcompl}:} Determining the ``most suitable'' (using $\mathcal{CM}$) service center for each demand unit takes $O(nk)$ time. Following this, the ASRAL algorithm constructs the MinDistance heap, which takes another $O(n)$ time. Note that initially $\Gamma DS$ is empty. Therefore, we include its update costs during the course of the algorithm (i.e., between steps 4 and 17).

The main loop in ASRAL algorithm (Steps 4 to 17) executes $n$ times. Within this loop, {\em extract-min} operation (Step 5) takes $O(\log n)$ time. In Step 6, we allot $d_i$ to $s_j$ which requires updating $k-1$ heaps, and thus takes at most $O(k\log n)$ operations. Step 7 invokes the negative cycle refinement procedure (Algorithm \ref{algrnegc}). Within Algorithm \ref{algrnegc}, Step 1 constructs the $NCyc\Gamma()$ in $O(k^2)$ steps. Step3 finds $P_{min}$ in $O(k^3)$ steps, and the loop in Step 5 performs at most $k^2$ heap operations of $O(\log n)$ steps each. Algorithm \ref{algrnegc} therefore runs in $O(k^3+k^2\log n)$ steps.  After negative cycle refinement, ASRAL algorithm undertakes negative path refinement at Step 14 (using Algorithm \ref{algrnegp}). Similar to negative cycle refinement, negative path refinement (Algorithm \ref{algrnegp}) also takes $O(k^3+k^2\log n)$ steps. 

In summary, the entire while loop in Steps 4 to 17 of ASRAL can therefore be completed in $O(nk^3+nk^2\log n)$ time. Therefore, the total time complexity of ASRAL (Algorithm \ref{mainalgo}) is $O(nk^3 + nk^2\log n)$. 

With regards to space complexity, $\Gamma DS$ would occupy a maximum of $O(nk)$ space since each demand unit can be present in at-most $k$ heaps (used to implement $\Gamma DS$) at any stage of execution. Other than $\Gamma DS$, $NCyc\Gamma()$ and $NPath\Gamma()$ would take $O(k^2)$ space. The MinDistance heap would at most occupy $O(n)$ space. Therefore, the total space complexity of ASRAL would be $O(nk)$. 
(end of proof of Theorem \ref{asralcompl})


\section{LBDD instance with Schools as Service Centers}
\label{pracLBDD}
A feasible unit of measurement for the objective function in this case could be the total cost of ``operation'' for one day in Dollars. Edge costs could be the fare amount spent while traveling that edge. For penalty, we could use metrics like faculty - student ratio. If a faculty-student ratio of 1:F is maintained in a school, then its overload penalty would be Faculty-annual-salary/(F x  365) for each allotment over the capacity of the school. One can easily create more complex penalty functions by incorporating other parameters like library-books-student ratio, lab-equipment-student ratio, etc. In such cases the capacity of the school should be set to the minimum of capacities in terms total-faculty, total-equipment, total-books, etc. It should be noted that proposed algorithm is oblivious of these implementation intricacies of the penalty functions as long as these adhere to Definition \ref{pdef}.

\end{appendices}

\end{document}